\documentclass{eptcs}
 % Name of the event you are submitting to

\usepackage{breakurl}        % Not needed if you use pdflatex only.

\usepackage{color}
\usepackage[latin1]{inputenc}
\usepackage{amsmath}
\usepackage{amssymb}
\usepackage{amsthm}
\usepackage{graphicx} 
\usepackage{prooftree}
\usepackage{proof}

\newcommand{\vx}{\overline{x}}
\newcommand{\vy}{\overline{y}}

\newcommand{\vq}{\overline{q}}
\newcommand{\vm}{\overline{m}}
\newcommand{\vu}{\overline{u}}
\newcommand{\vv}{\overline{v}}
\newcommand{\vs}{\overline{s}}

\newcommand{\vC}{\overline{C}}
\newcommand{\vM}{\overline{M}}
\newcommand{\Val}[1]{[\![ #1 ]\!]}
\newcommand{\N} {\mathbb{N}}

\newtheorem{theorem}{Theorem}[section]
\newtheorem{lemma}[theorem]{Lemma}
\newtheorem{corollary}[theorem]{Corollary}
\newtheorem{prop}[theorem]{Proposition}
\theoremstyle{definition}
\newtheorem{definition}[theorem]{Definition}

\newtheorem{remark}[theorem]{Remark}

\theoremstyle{plain}  \newtheorem{theonum}{Theorem}%

\newenvironment{theorm_num}[1]
  {\begin{theonum}[{\bf #1}] }{\end{theonum}}

\def\Arr{{{\rightarrow}\mskip-11mu{\rightarrow}}}

\newcommand{\church}[1]{\mathsf{#1}^\bullet}
\newcommand{\scott}[1]{\mathsf{#1}^\circ}
\newcommand{\op}[1]{\mathbf{#1}}
\newcommand{\class}[1]{\textsc{\bf{#1}}}
\newcommand{\DIAL}{\mathbf{DIAL}_{lin}}
\newcommand{\LM}{\mathbf{1\lambda^p(W)}}

%\theoremstyle{plain} \newtheorem{theo}{Theorem}
%\theoremstyle{definition} \newtheorem{defin}{Definition}
%\theoremstyle{plain} \newtheorem{lemm}[theo]{Lemma}
%\theoremstyle{plain} \newtheorem*{prop}{Property}
%\theoremstyle{plain}  \newtheorem{theonum}{Theorem}%

%\newenvironment{theorm}[1]
%  {\begin{theo}[{\bf #1}] \hspace{2mm} \\ \indent}
%  {\end{theo}}
%\newenvironment{theorm_num}[1]
%  {\begin{theonum}[{\bf #1}] \hspace{2mm} \\ \indent}
%  {\end{theonum}}
%\newenvironment{defn}[1]
%  { \begin{defin}[{\bf #1}] \hspace{2mm} \\ \indent}
%  {\end{defin}}

%\newenvironment{lemma}
%  {\begin{lemm} \hspace{2mm} \\ \indent}
%  {\end{lemm}}
%\newcommand{\N} {\mathbb{N}}

\author{Alo\"is Brunel
\institute{ENS Lyon}
\email{alois.brunel@ens-lyon.org}
\and
Kazushige Terui
\institute{RIMS, Kyoto University}
\email{terui@kurims.kyoto-u.ac.jp}
}

\title{Church $\Rightarrow$ Scott $=$ Ptime: \\
 an application of resource sensitive
realizability}

\begin{document}

\maketitle
\begin{abstract}
We introduce a variant of linear logic with second order quantifiers 
and type fixpoints, both restricted to purely linear formulas.
The Church encodings of  binary words are typed by 
a standard non-linear type `Church,'
while the Scott encodings (purely linear representations of words) are
by a linear type `Scott.'
We give a characterization of polynomial 
time functions, which is derived from (Leivant and Marion 93):
a function is computable in polynomial time if and only if 
it can be represented by a term of type 
Church $\Rightarrow$ Scott.

To prove soundness, we employ a resource sensitive 
realizability technique developed by Hofmann and Dal Lago.
\end{abstract}

\section{Introduction}
\label{introduction}

The field of implicit computational complexity aims to provide abstract,
qualitative, machine-independent characterizations of complexity classes
such as polynomial time and polynomial space functions. 
Along its development, two crucial factors for bounding 
complexity of programs have been identified:

\begin{description}
\item[Linearity:] In the higher order setting,
non-linear use of function variables often causes
an exponential growth of execution time. Hence 
a natural approach is to restrict use of 
higher order variables, often using types, in order to capture
the desired complexity classes. Examples are
light linear/affine logics \cite{girard1998light,AspertiRoversi:00}, 
their variant 
dual light affine logic \cite{baillot2009light}, 
soft linear logic \cite{Lafont2004}, 
and mixtures of linear higher order types with safe recursion
(eg., \cite{Bellantoni2000}, 
\cite{DBLP:journals/apal/Hofmann00}).
These logics all capture polynomial time functions, while 
there are also systems corresponding to polynomial space
\cite{DBLP:conf/popl/GaboardiMR08} and elementary 
functions \cite{girard1998light}.
% (although preliminary works like bounded linear logic already existed, see \cite{girard1992bounded}), where the modality rules are weakened in order to control the time complexity of the cut-elimination process. For instance, we can obtain systems where the cut-elimination always takes a polynomial time (like $LLL$), or elementary time ($ELL$) . 
\item[Data tiering:] Another source of exponential explosion lies in 
nested use of recursion, as observed by 
\cite{bellantoni1992new,DBLP:conf/lics/Leivant91}. 
Hence one naturally
restricts the structure of primitive recursive programs by data tiering.
This approach is most extensively pursued 
by a series of papers by
Leivant and Marion on 
tiered recursion (ramified recurrence)
\cite{leivant1993stratified,leivant1993lambda,Leivant94c,leivant1995ramified,leivant1997ramified}.
In tiered recursion, one has a countable number of copies of 
the binary word algebra, distinguished by tiers. Then 
a bad nesting of primitive recursion is avoided by requiring that 
the output of the defined function has a lower tier 
than the variable 
it recurses on. %Leivant and Marion gave several characterizations 
 	\end{description}
%A type/logical system or a fragment of a logical system is 
%said to be \textit{sound with respect to a given complexity class} 
%if each function which can be represented in the logical system 
%is in the desired complexity class. It is said to be 
%\textit{complete with respect to a complexity class} if we can represent 
%all the functions of the complexity class in the system we consider.

\noindent 	
{\bf Data tiering and higher order functionals.}
Along their development of ramified recurrence,
Leivant and Marion have made an interesting observation in
\cite{leivant1993lambda},
which reveals an intimate relationship between data tiering and higher order
functionals. They consider a simply typed 
$\lambda$-calculus over a first order word algebra, 
called $\LM$. The system is inherently equipped with two ``tiers'':
the first order word algebra (of base type $o$) as
the lower ``tier,'' and 
the Church encodings of words 
(of higher order type $(\tau\rightarrow \tau)^2 \rightarrow 
\tau\rightarrow \tau$)
as the higher one. First order words are just bit strings, while
Church words internalize the iteration scheme. 
Due to this inherent tiering, the programs from 
the Church words to the first order words capture
the polynomial time functions.
%This characterization seems to have been the first characterization of such a complexity class using a $\lambda$-calculus.

What their work reveals is a rather logical nature of tiers; 
in the end, tiering is nothing but the distinction between first order and
higher order data. It is then natural to go one step further towards
the logical direction, by replacing the first order algebra with the
\emph{linear} lambda terms, and by identifying the higher order data
with \emph{non-linear} terms. Our intuition is backed up by the fact that
the linear encoding of words, 
often attributed to Scott (cf.\ \cite{n-types}), 
behaves very similarly to the
first order words; for instance, they admit constant time
successor, predecessor and discriminator, while they are not enhanced with
the power of iteration in their own. 

To identify the set of Scott words, 
it is useful to introduce a type system with linearity
and type fixpoints. 
We therefore introduce a variant of linear logic, called $\DIAL$,
as a typing system for the pure $\lambda$-terms.
This system distinguishes non-linear and linear arrows
and has second order quantifiers and type fixpoints, both restricted
to linear types.
Morally, the base type of $\LM$ corresponds to the hereditarily
linear formulas of $\DIAL$,
and the higher types of $\LM$ to 
the non-linear formulas.
We then characterize the class of polynomial time functions 
as those represented by
terms of type: `Church' (nonlinear words) $\Rightarrow$ `Scott' 
(linear words).
The two types for
binary words play the role of the two tiers. 
Our work thus exhibits a connection between the two factors controlling 
complexity:
linearity and data tiering.

\medskip
\noindent{\bf Resource sensitive realizability.}
Following some preceding works \cite{hofmann1997application,DBLP:journals/apal/Hofmann00,DBLP:journals/iandc/Hofmann03},
Dal Lago and Hofmann have introduced in \cite{dal2005quantitative} 
a realizability semantics which is useful to reason about
the complexity bounds for various systems uniformly. 
In their framework,
the realizers are pure $\lambda$-terms (values, to be more precise) under 
the weak call-by-value semantics, and they come equipped with 
the resource bounds expressed by elements of a resource monoid. 
Various systems are then dealt with 
by choosing a suitable resource monoid, while the basic
realizability constructions are unchanged.
This framework has offered new and uniform proofs of the soundness theorems
for LAL, EAL, LFPL, SAL and BLL
with respect to the associated complexity classes 
\cite{dal2005quantitative,DBLP:conf/tlca/LagoH09}.  

We here apply their technique to prove that 
all terms of type Church $\Rightarrow$ Scott in the system  
$\DIAL$ are polytime. % sound with respect to \textbf{FP}). 
The main novelty is that we build a suitable (partial) resource monoid
based on higher order polynomials. Also, we do not require
that realizers are values. This allows us to directly infer
the complexity bounds of arbitrary $\lambda$ terms
(not restricted to values).

% this
%might be considered a bit regressive, but on the positive side
%our work bridges the traditional realizability with their drastically new 
%one. We hope that our work will provide one more evidence 
%to the usefulness of the Hofmann-Dal Lago realizability.
\medskip
\noindent{\bf Outline.}
Section \ref{section_2} introduces the system $\DIAL$ and 
states the main results. Section \ref{section_3} introduces 
the realizability semantics and proves the adequacy theorem.
Section \ref{section_4} applies these tools to derive the 
soundness theorem.
Section \ref{conclusion} concludes this work. 
 
\section{System $\DIAL$}

In this section, we recall the weak call-by-value $\lambda$-calculus
with the time cost measure of \cite{dal2008weak}, and then 
introduce the type system $\DIAL$ 
derived from second order affine linear logic with type fixpoints. 
The system emulates the two tiers of 
$\LM$
by distinguishing linear and non-linear types.
\label{section_2}
\subsection{Weak call-by-value lambda calculus with time measure}
\label{computational}

We assume that a set of variables $x,y,z,\dots$ are given.
As usual, the \emph{$\lambda$-terms} $t, u$ are defined by the grammar:
$t,u ::= x\,\,|\,\, \lambda x.t \,\,|\,\, tu$.
The set of $\lambda$-terms is denoted by $\Lambda$.
Terms of the form $x$ or $\lambda x.t$ are called \emph{values}.
We denote by $FV(t)$ the set of the free variables of $t$
and by $\Val{t}_\beta$ the $\beta$-normal form of $t$.
The \textit{size} $|t|$ of a term $t$ is defined by:
$$
|x|  =  1, \qquad
|\lambda x. t| =  |t| + 1, \qquad
|tu|  =  |t| + |u|.
$$

As with \cite{dal2008weak},
we adopt the \textit{weak call-by-value} reduction strategy, which is defined by: 
$$
\infer{(\lambda x.t)v \rightarrow t[v/x]}{}
\qquad
\infer{t_1 u \rightarrow t_2 u}{t_1 \rightarrow t_2}
\qquad
\infer{u t_1  \rightarrow u t_2}{t_1 \rightarrow t_2}
$$
where $v$ denotes a value. We write $t\Downarrow$ if
$t$ evaluates to a value $v$: $t \rightarrow^* v$. 
The value $v$ is unique whenever $t \rightarrow^* v$, so we write 
$\Val{t} = v$.
It should not be confused with the $\beta$-normal form $\Val{t}_\beta$
of $t$.
%Although it is not strictly deterministic,
%the cost of evaluation defined below is invariant for all 
%possible reduction sequences.

The cost of evaluation is specified by 
a ternary relation 
$t \stackrel{n}{\Arr} u$, meaning that $t$ reduces to $u$ with cost 
$n$, defined as follows:
$$
\infer{t \stackrel{0}{\Arr} t}{}
\qquad
\infer{t \stackrel{n}{\Arr} u}{
t \rightarrow u & n = max\{|u| - |t|,1\}}
\qquad
\infer{s \stackrel{n+m}{\Arr} u}{
s \stackrel{n}{\Arr} t & t \stackrel{m}{\Arr} u}
$$
The definition takes into account 
the cost of duplications. In particular we have:
\begin{lemma}\label{l-cost}
Suppose that 
$(\lambda x.t)v \stackrel{n}{\Arr} t[v/x]$
and $x$ occurs $c$ times in $t$.
Then $n=1$ if $c\leq 1$, and
$n \leq (c-1)|v|$ if $c\geq 2$.
\end{lemma}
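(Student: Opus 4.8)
The plan is to compute directly the cost of the single $\beta$-step and reduce everything to arithmetic on term sizes. The key preliminary fact is a size identity for substitution: if $x$ occurs $c$ times in $t$, then $|t[v/x]| = |t| + c(|v|-1)$. I would prove this by a straightforward induction on the structure of $t$, the point being that each of the $c$ occurrences of $x$, of size $1$, is replaced by a copy of $v$, of size $|v|$, contributing $|v|-1$ to the total size, while the rest of $t$ is untouched. The base cases $t=x$ and $t=y\neq x$, and the inductive cases for $\lambda$-abstraction and application, are all immediate from the defining clauses of $|\cdot|$ together with additivity of the occurrence counts of $x$ over subterms.

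Next I would observe that $(\lambda x.t)v$ is a redex whose only weak call-by-value reduction is the top-level contraction $(\lambda x.t)v \rightarrow t[v/x]$: the argument $v$ is a value and hence does not reduce, and there is no reduction under $\lambda$, so no other step is available. Consequently the cost is forced to be $n = \max\{|t[v/x]| - |(\lambda x.t)v|,\,1\}$, since inserting trivial zero-cost steps via the reflexivity and transitivity rules cannot change this value. Using $|(\lambda x.t)v| = |t| + 1 + |v|$ together with the size identity above, the unclamped difference equals $(c-1)|v| - (c+1)$.

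It then remains to perform a case analysis on $c$. For $c=0$ the difference is $-|v|-1$ and for $c=1$ it is $-2$; both are negative, so the clamp yields $n=1$, which is the first claim. For $c\geq 2$ I would split on the sign of the difference: if it is at least $1$, then $n = (c-1)|v|-(c+1)\leq (c-1)|v|$ since $c+1>0$; otherwise $n=1$, and $1\leq (c-1)|v|$ because $c-1\geq 1$ and $|v|\geq 1$. In either subcase $n\leq (c-1)|v|$, establishing the second claim.

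There is no genuine obstacle here: the only step requiring an argument is the substitution size identity, and that is a routine structural induction. The remainder is the arithmetic above, the sole mild subtlety being to keep in mind that the cost of a single step is clamped below by $1$, which is exactly what produces the two regimes ($n=1$ versus $n\leq(c-1)|v|$) appearing in the statement.
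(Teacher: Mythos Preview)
Your proposal is correct and follows essentially the same approach as the paper: compute the size difference $|t[v/x]|-|(\lambda x.t)v|$ and compare it with the clamping value $1$. The paper is terser, using only the inequality $|t[v/x]|\leq |t|+c|v|$ rather than your exact identity $|t[v/x]|=|t|+c(|v|-1)$, and it leaves implicit the sub-case $n=1\leq(c-1)|v|$ that you spell out; your version is simply a more careful rendering of the same argument.
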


\begin{proof} In the first case,
$|t[v/x]| < |(\lambda x.t)v|$. In the second case,
$|t[v/x]| - |(\lambda x.t)v| \leq |t| + c|v| - (|t| + 1 + |v|)
\leq (c-1)|v|$.
\end{proof}

%Suppose that a term $t$ \emph{evaluates} to a value $v$, namely
%$t \stackrel{n}{\Arr} v$ for some $n$. 
%Since $v$ is unique if it exists,
%we may define $\Val{t}=v$ ($\Val{t}$ is undefined otherwise).

A distinctive feature of 
the above cost model is that the cost $n$ is unique:
$t \stackrel{n}{\Arr} v$ and 
$t \stackrel{m}{\Arr} v$ imply $n = m$ \cite{dal2008weak}. 
So we may define $Time (t) = n$ without ambiguity
($Time(t)$ is undefined if $t \not\Downarrow$). 
Finally, let $TS(t) = Time(t) + |t|$.
(It should be noticed that $TS(t)$ is denoted as
$Time(t)$ in \cite{dal2008weak}; our notation is due to
\cite{dal-semantic}.) 

It is proved in \cite{dal2008weak} that this cost model is invariant, 
which means that $\lambda$-calculus and Turing machines 
simulate each other with
a polynomial time overhead. In particular, we have:

\begin{theorem}\label{t-turing}
There exists a Turing machine $M_{eval}$ with the following property:
given  a $\lambda$-term $t$ such that $t\Downarrow$
and  $TS(t) = Time(t) + |t| = n$,
$M_{eval}$ computes $\Val{t}$ in time $O(n^4)$.
\end{theorem}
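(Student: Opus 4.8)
The plan is to have a single multi-tape Turing machine $M_{eval}$ carry out the weak call-by-value reduction of $t$ one step at a time, keeping the current term encoded on its tape, and to show that both the number of steps and the cost of simulating each step are polynomial in $n$.

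Both quantitative ingredients come directly from the cost measure. First I would bound the number of reduction steps: since every step $t_i\rightarrow t_{i+1}$ has cost $n_i=\max\{|t_{i+1}|-|t_i|,1\}\geq 1$ and these costs sum to $Time(t)$, the evaluation $t\rightarrow^{*}\Val{t}$ consists of at most $Time(t)\leq n$ steps. Second I would bound the size of every intermediate term: telescoping $|t_{i+1}|-|t_i|\leq n_i$ yields $|t_i|\leq |t|+\sum_{j<i}n_j\leq |t|+Time(t)=TS(t)=n$, so no term arising during the evaluation ever exceeds size $n$. This is the crucial observation: because the strategy substitutes only values and the cost measure already charges for their duplication (as quantified in Lemma~\ref{l-cost}), the intermediate terms cannot undergo the size explosion that would otherwise make the simulation super-polynomial.

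Next I would fix a tape encoding of $\lambda$-terms --- for instance de Bruijn terms, which avoid $\alpha$-renaming --- so that a term $u$ occupies space $O(|u|\log|u|)=O(n\log n)$, the logarithmic factor accounting for variable indices. On a multi-tape machine, simulating one step $t_i\rightarrow t_{i+1}$ then amounts to: (i) locating the weak call-by-value redex $(\lambda x.t)v$ by a left-to-right scan, which costs $O(n)$ since we never descend under an abstraction; and (ii) producing $t[v/x]$ on a fresh tape by scanning $t$ and splicing in a copy of $v$ (with indices lifted) at each occurrence of $x$. The key point for the cost is that the spliced-in copies together make up part of $t_{i+1}$, whose size is $\leq n$, so the total material written and re-read during the substitution is $O(n)$; hence one step costs $O(n)$ up to logarithmic factors. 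Multiplying by the $O(n)$ steps gives an $O(n^2)$ multi-tape simulation, and the standard quadratic overhead of reducing a multi-tape machine to a single-tape one yields the stated $O(n^4)$ bound. The machine finally outputs the last term, which is $\Val{t}$.

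The main obstacle is the honest implementation of substitution. Naively one fears an exponential blow-up, since substitution duplicates the argument and, on a tape, every insertion forces shifting of the surrounding symbols. The resolution is the interplay highlighted above: the size bound guarantees that the object manipulated at every step has size $\leq n$, so no single step can explode, while a careful representation (de Bruijn indices together with writing each contractum onto fresh tape rather than inserting in place) keeps the amortized per-step cost linear. Pinning down the precise degree --- in particular verifying that the index-lifting and tape-management overheads contribute only the constant and logarithmic factors that the $O(n^4)$ bound absorbs, and that a deterministic scheduling of the (cost-invariant) reduction is faithfully realized --- is the technical heart of the argument, carried out in detail in \cite{dal2008weak}.
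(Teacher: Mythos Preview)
Your sketch is correct and is precisely the argument of \cite{dal2008weak}; the paper itself does not prove this theorem at all but simply imports it from that reference, so there is no independent ``paper proof'' to compare against. Your two quantitative observations --- that the reduction has at most $Time(t)\leq n$ steps and that every intermediate term has size at most $TS(t)=n$ --- are exactly the lemmas that drive the simulation in \cite{dal2008weak}, and your de~Bruijn/multi-tape implementation with the final single-tape quadratic overhead is the route taken there to reach the $O(n^4)$ bound.
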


The following facts (cf.\ \cite{dal2008weak}) will be useful below.

\begin{lemma}
\label{value_lemma} The following hold when $t\Downarrow$.
\begin{description}
\item[(size)] $|\Val{t}| \leq TS(t)$.
\item[(exchange)] If 
$t= (\lambda x_1 x_2.s)u_1 u_2$ and 
$t' = (\lambda x_2 x_1.s)u_2 u_1$,
then $TS(t') = TS(t)$.
\item[(contraction)] If
$t = (\lambda x_1 x_2. s) u u$ and 
$t' = (\lambda x. s[x/x_1, x/x_2]) u$, then
$TS(t') \leq TS(t)$.
\item[(weakening)] If
$t' = (\lambda x. t)u$, $x\not\in FV(t)$ and $u\Downarrow$, then 
$TS(t') = TS(t)+ TS(u)+2$.
\item[(concatenation)] If
$t = s_1 (( \lambda x.s_2)u)$ and
$t' = (\lambda x. s_1 s_2)u$ ($x \not\in FV(s_1)$), then
$TS(t') = TS(t)$.
\item[(identity)] If 
$t' = (\lambda x.x)t$, then 
$TS(t') = TS(t) + 3$.
\end{description}
\end{lemma}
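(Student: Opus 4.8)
The plan is to reduce all six clauses to a single bookkeeping identity for $TS$. Along an evaluation $t = t_0 \rightarrow t_1 \rightarrow \cdots \rightarrow t_k = \Val{t}$, each step contributes cost $\max\{|t_{i+1}|-|t_i|,1\}$, which I rewrite as $(|t_{i+1}|-|t_i|) + s_i$ with \emph{slack} $s_i := \max\{0,\,1-(|t_{i+1}|-|t_i|)\}\geq 0$. The size differences telescope to $|\Val{t}|-|t|$, so $Time(t) = (|\Val{t}|-|t|) + S(t)$ with $S(t):=\sum_i s_i$, and hence
\[
TS(t) = |\Val{t}| + S(t).
\]
Since $S(t)\geq 0$ this yields \textbf{(size)} immediately, and it reduces every remaining clause to comparing the values and the total slacks of $t$ and $t'$. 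Note also that, because the cost to reach the value is unique \cite{dal2008weak} and the telescoped size term is the same along every reduction, the total slack of \emph{any} reduction of $t$ to $\Val{t}$ equals the fixed quantity $S(t)$.

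Two observations make the slack tractable. First, a step $r\rightarrow r'$ fired inside an applicative context $C[\cdot]$ has the \emph{same} size change as in isolation, since $|C[r']|-|C[r]| = |r'|-|r|$; therefore the slack accumulated while reducing a subterm in place equals the slack of evaluating that subterm alone. Together with the remark above this lets me evaluate any composite term in a convenient order and read off $S(t)$ as the sum of the slacks of the sub-evaluations and of the individual $\beta$-steps. Second, the slack of a single step $(\lambda x.r)v \rightarrow r[v/x]$ depends only on the number $c$ of occurrences of $x$ in $r$ and on $|v|$: a one-line computation in the style of Lemma \ref{l-cost} (using $|r[v/x]|=|r|+c(|v|-1)$ and $|(\lambda x.r)v|=|r|+1+|v|$) gives slack $\sigma(c,v)=\max\{0,\,2+|v|-c(|v|-1)\}$, which is non-increasing in $c$ and satisfies $\sigma(1,v)=3$ and $\sigma(0,v)=|v|+2$.

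With these in hand each clause is a matching argument. For \textbf{(exchange)} both terms evaluate the same arguments to the same values $v_1,v_2$ and fire the two $\beta$-steps carrying the (count, value) pairs $(a,v_1)$ and $(b,v_2)$ (where $a,b$ count $x_1,x_2$ in $s$), merely in the opposite order, reaching the common reduct $s[v_1/x_1,v_2/x_2]$; the multiset of sub-evaluations and of slacks is identical, so $|\Val{t}|=|\Val{t'}|$ and $S(t)=S(t')$. \textbf{(concatenation)} is similar: both perform $S(u)$, one $\beta$-step of slack $\sigma(c,\Val{u})$ with $c$ the number of occurrences of $x$ in $s_2$ (the hypothesis $x\notin FV(s_1)$ making this count the same on both sides), and the evaluations of $s_1$, of $s_2[\Val{u}/x]$, and of the final application, so again equal values and $S(t)=S(t')$. \textbf{(identity)} and \textbf{(weakening)} are direct computations: $TS((\lambda x.x)t)=|\Val{t}|+S(t)+\sigma(1,\Val{t})=TS(t)+3$, and for $t'=(\lambda x.t)u$ with $x\notin FV(t)$ one gets $S(t')=S(u)+\sigma(0,\Val{u})+S(t)$ and $\Val{t'}=\Val{t}$, whence $TS(t')=|\Val{t}|+S(t)+S(u)+|\Val{u}|+2=TS(t)+TS(u)+2$. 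Finally \textbf{(contraction)} is the only inequality: $t$ evaluates $u$ twice and fires $\beta$-steps of counts $a,b$, while $t'$ evaluates $u$ once and fires a single $\beta$-step of count $a+b$, both reaching the same value; thus $TS(t)-TS(t')=S(u)+\sigma(a,v)+\sigma(b,v)-\sigma(a+b,v)$ with $v=\Val{u}$, which is $\geq 0$ since $S(u),\sigma(b,v)\geq 0$ and $\sigma$ is non-increasing in $c$, so $\sigma(a+b,v)\leq\sigma(a,v)$.

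I expect the only delicate point to be making the decomposition of $S$ fully rigorous, i.e.\ justifying that the total slack splits as a sum of subterm slacks and per-$\beta$-step slacks independently of the evaluation order. This rests squarely on the uniqueness of the cost measure from \cite{dal2008weak}; once that is invoked, the context-invariance of size changes makes the bookkeeping routine, and every clause collapses to the elementary properties of $\sigma(c,v)$ recorded above.
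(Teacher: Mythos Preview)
Your argument is correct, and it takes a genuinely different route from the paper. The paper verifies each clause by a direct, somewhat ad hoc computation of $TS$: for (size) it checks $TS$ is non-increasing at each step; for (weakening), (identity) and (contraction) it writes out an explicit reduction sequence and tallies the cost, with (contraction) requiring a case split on whether each $x_i$ occurs more than once in $s$; and (exchange), (concatenation) are dismissed as ``easily verified.'' Your approach instead extracts the single identity $TS(t)=|\Val{t}|+S(t)$ from the telescoping of size differences, which immediately yields (size) and reduces every remaining clause to comparing final values and multisets of per-step slacks via the closed form $\sigma(c,v)=\max\{0,\,2+|v|-c(|v|-1)\}$. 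This buys a uniform treatment---in particular your (contraction) argument avoids the paper's case analysis entirely, since the monotonicity of $\sigma$ in $c$ absorbs all cases at once---and it makes transparent \emph{why} (exchange) and (concatenation) are equalities: both sides involve the same multiset of $(c,v)$-data. The only extra ingredient you invoke beyond the paper is the path-independence of $S(t)$, but that follows at once from the uniqueness of $Time(t)$ cited from \cite{dal2008weak} (together with confluence of weak call-by-value, so that any chosen reduction order does reach $\Val{t}$), so this is not a gap.
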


\begin{proof} For (size),
it is sufficient to prove that 
if $t \rightarrow u$ then $TS(t)\geq TS(u)$.
If $|u| - |t| \geq 1$, we have 
$TS(t)= Time(t) + |t| = (|u|-|t| + Time(u)) + |t| = Time(u)+ |u| =TS(u)$.
Otherwise,  
$ TS(t) = (1 + Time(u)) + |t| \geq Time(u)+ |u| = TS(u).$\\
For (weakening), we have $t' 
\stackrel{n}{\Arr} (\lambda x.t)\Val{u}
\stackrel{1}{\Arr} t$ with $n= Time(u)$.
Hence 
$TS(t') = Time(t') + |t'| =
(n + 1 + Time(t)) + (|t| + |u| + 1)
= TS(t) + TS(u) + 2$.\\
For (identity), we have $t'
\stackrel{n}{\Arr} (\lambda x.x)\Val{t}
\stackrel{1}{\Arr} \Val{t}$ with $n= Time(t)$.
Hence
$TS(t') = Time(t') + |t'| = (n + 1) + (|t| + 2)
= TS(t) + 3$.\\
For (contraction), we have
$t \stackrel{2n}{\Arr} (\lambda x_1 x_2.s)\Val{u}\Val{u} 
\stackrel{m}{\Arr} s[\Val{u}/x_1,\Val{u}/x_2] = t_0$ and
$t' \stackrel{n}{\Arr} (\lambda x.s[x/x_1, x/x_2])\Val{u}$
$\stackrel{k}{\Arr}$ $t_0$. Consider the case when each of $x_1$ and $x_2$ 
occurs more than once in $s$. Then
$m = |t_0| - 
|(\lambda x_1 x_2.s)\Val{u}\Val{u}|$ and 
$k = |t_0| - |(\lambda x.s[x/x_1, x/x_2])\Val{u}|$.
Hence we have:
\begin{eqnarray*}
TS(t) & = &
2n + (|t_0| - |(\lambda x_1 x_2.s)\Val{u}\Val{u}|) + Time(t_0) + |t| \\
 & = & 
2n + TS(t_0) + 2|u| - 2|\Val{u}|;\\
TS(t') & = & 
n + (|t_0| - |(\lambda x.s[x/x_1, x/x_2])\Val{u}|) + Time(t_0) + |t'| \\
  & = & 
n + TS(t_0) + |u| - |\Val{u}|.
\end{eqnarray*}
By (size), we have $|\Val{u}| \leq TS(u) = n + |u|$, hence we conclude
$TS(t') \leq TS(t)$.
The calculation is similar when either $z_1$ or $z_2$ occurs at most once.

The equations for (exchange) and (concatenation) are easily verified.
\end{proof}

\subsection{The dual type system}

We now introduce the system $\DIAL$: the dual intuitionistic affine 
logic with linear quantifiers and type fixpoints. 
It is based on intuitionistic
linear logic with unrestricted weakening (thus ``linear'' actually 
means ``affine'').
It does not possess the $!$ connective but distinguishes 
linear and non-linear function spaces as in \cite{baillot2009light}. 
It has the second order quantifier and the type fixpoint operator,
but both are restricted to purely linear formulas.
%This is the dual system of Intuitionnistic Affine Linear Logic with 
%second order quantifiers and
%type fixpoints where the fixpoint and universal quantifier eliminations
%are limited to linear formulas. 

Given a set of propositional variables $\alpha, \beta, \dots$,
the (general) \emph{formulas} $A, B, \dots$ and 
the \emph{linear formulas} $L, M, \dots$ are defined by the following
grammar:
$$ L,M ::= \alpha\,\,|\,\, \forall \alpha L\,\,|\,\,\mu\alpha L^{(*)}\,\,|\,\,L\multimap M, \qquad 
A,B ::= L\,\,|\,\, \forall \alpha A \,\,|\,\, L \multimap B\,\,|\,\, A\Rightarrow B.$$
$(*)$ : we add the condition that we can build $\mu \alpha L$ only if $\alpha$ 
occurs only positively in $L$. 
This is a common restriction that makes it easier 
to interpret fixpoint types in realizability semantics.

Thus the linear formulas are  the formulas that do not contain 
any $\Rightarrow$. 
%We will call \textit{general} a formula $A$ that is not linear. 
%We will occasionally call general formulas non-linear to contrast with
%linear ones. 
%As an example, it is possible to build the formula $\forall \alpha((\alpha\multimap\alpha)\multimap(\alpha \Rightarrow \alpha))$ but not $\forall \alpha((\alpha\Rightarrow\alpha)\multimap(\alpha \multimap \alpha))$. 

We handle judgments of the form $\Gamma;\Delta \vdash t : A$, 
where $\Delta$ consists of assignments of the form $(x:L)$ 
with $L$ a linear formula, and
$\Gamma$ consists of $(x:A)$ 
with $A$ an arbitrary formula. We assume that variables in $\Gamma$ and
$\Delta$ are distinct.
The variables in $\Delta$ are 
intended to be affine linear: each of them occurs at most once in $t$, 
in contrast to 
those in $\Gamma$ which may have multiple occurrences.
The typing rules are defined in Figure \ref{typing_rules}. 
Notice that $L$ always denotes a linear formula. 

\begin{figure}[ht!]
\centering
\begin{tabular}{|cc|}  
  \hline
\prooftree
   \justifies  x : A; \vdash x : A
\thickness=0.08em
\shiftright 0em\using (ax1) \endprooftree
&
\prooftree
   \justifies  ; x : L \vdash x : L
\thickness=0.08em
\shiftright 0em\using (ax2) \endprooftree
\\
&\\
\prooftree
\Gamma;\Delta \vdash t : \mu \alpha L \justifies \Gamma;\Delta \vdash  t : L[\mu \alpha L/\alpha]
\thickness=0.08em
\shiftright 0em\using (\mu_e) \endprooftree 
 &
 
\prooftree
\Gamma; \Delta \vdash  t : L[\mu \alpha L/\alpha]  \justifies \Gamma; \Delta \vdash  t : \mu \alpha L
\thickness=0.08em
\shiftright 0em\using (\mu_i) \endprooftree \\
&\\
\prooftree
\Gamma ; \Delta \vdash t : A \quad\quad \alpha \notin FV(\Gamma;\Delta) \justifies \Gamma ; \Delta \vdash t : \forall \alpha A
\thickness=0.08em
\shiftright 0em\using (\forall_i) \endprooftree 
 &
 
\prooftree
\Gamma;\Delta \vdash  t : \forall \alpha A \justifies \Gamma; \Delta 
\vdash t : A[L/\alpha]
\thickness=0.08em
\shiftright 0em\using (\forall_e) \endprooftree \\ 
 &\\
\prooftree
\Gamma_1 ; \Delta \vdash t : A \Rightarrow B \quad\quad  \Gamma_2 ; \vdash u : A \justifies  \Gamma_1, \Gamma_2 ;\Delta \vdash tu : B
\thickness=0.08em
\shiftright 0em\using (\Rightarrow_e) \endprooftree 
& 
\prooftree
\Gamma, z : A; \Delta \vdash t : B  \justifies \Gamma;\Delta \vdash \lambda z\,t : A \Rightarrow B
\thickness=0.08em
\shiftright 0em\using (\Rightarrow_i) \endprooftree \\
&\\
\prooftree
\Gamma_1 ; \Delta_1 \vdash t : L \multimap B \quad\quad  \Gamma_2 ; \Delta_2 \vdash u : L \justifies \Gamma_1, \Gamma_2 ; \Delta_1, \Delta_2 \vdash tu : B
\thickness=0.08em
\shiftright 0em\using (\multimap_e) \endprooftree 
& 
\prooftree
\Gamma; \Delta, z : L \vdash t : B \justifies \Gamma;\Delta \vdash \lambda z\,t : L \multimap B
\thickness=0.08em
\shiftright 0em\using (\multimap_i) \endprooftree \\
& \\
\multicolumn{2}{|c|}{
\quad
\prooftree
 \Gamma, x : A, y: A; \Delta  \vdash t : B\justifies  \Gamma, z : A ; \Delta \vdash t[z/x, z/y] : B
\thickness=0.08em
\shiftright 0em\using (Contr) \endprooftree
\qquad
\prooftree
 \Gamma ; \Delta, x : L \vdash t : B\justifies  \Gamma, x : L ; \Delta \vdash t : B
\thickness=0.08em
\shiftright 0em\using (Derel) \endprooftree
\qquad
\prooftree
 \Gamma ; \Delta  \vdash t : B\justifies  \Gamma, \Gamma'  ; \Delta, \Delta' \vdash t : B
\thickness=0.08em
\shiftright 0em\using (Weak) \endprooftree\quad}\\
& \\
  \hline
\end{tabular}
\caption{Typing rules of $\DIAL$}
\label{typing_rules}
\end{figure}

We say that a term $t$ is \emph{of type} $A$ in $\DIAL$ if 
$\vdash t: A$ is derivable by the typing rules in Figure 
\ref{typing_rules}.
Below are some remarks.
\begin{itemize}
\item The intended meaning of judgment $\Gamma; \Delta \vdash t : A$ is
$!\Gamma^*, \Delta \vdash t: A^*$, where $\Gamma^*, 
A^*$ are translations into linear logic given by 
$(B \Rightarrow C)^* = !B^* \multimap C^*$. Hence the rule
(Contr) can be applied only to variables in $\Gamma$.

\item The $\Rightarrow_e$ rule implicitly performs the $!$ promotion on
$A$, so the judgment for $u$ should not contain a linear variable.
\item We only allow substitution 
of linear formulas for propositional variables (in rules $(\forall_e)$, $(\mu_i)$ and $(\mu_e)$). One can check that such a substitution in a formula always results in a formula. This restriction is strictly necessary, since 
the exponential function would be typed otherwise
(see below).

\item One unpleasant restriction is that the premise $L$ 
of a linear implication $L\multimap B$ has to be linear. 
It does not seem essential for complexity, 
but our realizability argument forces it.

\item The type system enjoys the subject reduction property 
with respect to the $\beta$-reduction.
\end{itemize}

%When we have two formulas $T = \forall \alpha.T'$ and $A$ a formula, we will note $T(A)$ the formula $T'[A/\alpha]$.

\subsection{Church and Scott data types}

In $\DIAL$, data may be represented in two ways, either in the
Church style or in the Scott style. Figure \ref{f-encoding} illustrates
the two encodings for natural numbers $n$ and 
binary words $w \in \{0,1\}^*$, together with 
some basic functions defined on them.
In the definition of $\church{w}$, $w$ is assumed to be 
$i_1 \cdots i_n$ where each $i_k$ is either $0$ or $1$.

The first thing to be verified is the following:
\begin{prop}
For every term $t$ in $\beta$-normal form, 
$\vdash t : \church{N}$ if and only if 
$t$ is a Church numeral $\church{n}$ (or $\lambda x.x$, that is 
$\eta$-equivalent to $\church{1}$).
$\vdash t : \scott{N}$ if and only if 
$t$ is a Scott numeral $\scott{n}$.
Similarly for Church and Scott words.
\end{prop}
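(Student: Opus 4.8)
The plan is to prove the easy (``if'') direction by exhibiting explicit typing derivations, and the hard (``only if'') direction by a generation/inversion analysis on $\beta$-normal forms. For the easy direction, one checks directly that each $\church{n}$, $\scott{n}$, $\church{w}$, $\scott{w}$ (and the $\eta$-contracted $\lambda x.x$) is typable at the claimed type; these are routine derivations using $(\forall_i)$ to bind the parameter, $(\mu_i)$/$(\mu_e)$ to fold and unfold in the Scott cases, and $(Contr)$ to account for the repeated successor variable in the Church cases. The substance is the converse, which I would phrase as: every closed $\beta$-normal $t$ with $\vdash t:\church{N}$ lies in $\{\church{n}\}\cup\{\lambda x.x\}$, and analogously for the three other data types.

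First I would set up a generation lemma describing, for a normal form $t$ with $\Gamma;\Delta\vdash t:A$, the possible shapes of $t$ in terms of $A$. Recall that a $\beta$-normal form is either an abstraction $\lambda z.s$ or a neutral term $x\,t_1\cdots t_k$ with a variable head. The main obstacle is that the typing rules are \emph{not} syntax-directed: the rules $(\forall_i)$, $(\forall_e)$, $(\mu_i)$, $(\mu_e)$, $(Derel)$, $(Weak)$ and $(Contr)$ all leave the term essentially unchanged, so the last rule of a derivation need not reflect the head constructor of $t$. I would dispose of this by a rule-permutation argument: the quantifier and fixpoint rules commute with the introduction and elimination rules in the expected way (using freshness of $\alpha$ for $(\forall_i)$, and reading $\mu\alpha L$ up to the unfolding $L[\mu\alpha L/\alpha]$ granted by $(\mu_i)$/$(\mu_e)$), while the structural rules can be pushed toward the axioms. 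This lets me assume a canonical derivation in which an abstraction is introduced by $(\multimap_i)$ or $(\Rightarrow_i)$, possibly under a block of $(\forall_i)$ and $\mu$-folds, and a neutral term is built by a sequence of $(\multimap_e)$/$(\Rightarrow_e)$ applied to an axiom head, with instantiations $(\forall_e)$ and unfoldings interleaved.

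With the generation lemma in hand, the core is a short induction on $|t|$, driven by the fact that an \emph{atomic} type can be produced only by a designated head variable. Writing $\church{N}=\forall\alpha.(\alpha\multimap\alpha)\Rightarrow(\alpha\multimap\alpha)$, a closed normal $t:\church{N}$ cannot be neutral, so $t=\lambda f.s$ with $f:\alpha\multimap\alpha$ non-linear and $s:\alpha\multimap\alpha$, where $\alpha$ is the fresh atom bound by $(\forall_i)$. Now $s$ is either the neutral term $f$ itself, giving the exceptional $t=\lambda f.f$ (the $\eta$-contraction of $\church{1}$), or an abstraction $\lambda x.s'$ with $x:\alpha$ linear and $s':\alpha$. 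A normal term of the atomic type $\alpha$ in this context must be neutral, and its head can only be $x$ (yielding $x$) or $f$ (yielding $f\,s''$ with $s'':\alpha$), since $\alpha$ is atomic and $f$ is the sole way to produce it from an $\alpha$. Iterating, $s'=f^{k}x$ for some $k\geq 0$, whence $t=\lambda f x.f^{k}x=\church{k}$; affineness forces $x$ to occur, and freshness of $\alpha$ forbids any other variable from producing it, so no further inhabitants arise.

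The Scott case is handled the same way, now using the fixpoint. Unfolding $\scott{N}=\mu\alpha.\forall\beta.(\alpha\multimap\beta)\multimap(\beta\multimap\beta)$ once and stripping $\forall$ against a fresh atom $\beta$ gives $t=\lambda s\,z.t''$ with $s:\scott{N}\multimap\beta$ and $z:\beta$ both linear and $t'':\beta$; the neutral $t''$ has head $z$ (giving $\scott{0}$) or $s$ applied to some $r:\scott{N}$, and by the induction hypothesis $r$ is a Scott numeral $\scott{m}$, giving $t=\scott{m+1}$. Here freshness of $\beta$ is again what guarantees that the subterm $r$ cannot consume the outer variables $s,z$, so it is genuinely a closed Scott numeral. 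The word cases are identical except that the matrix carries two successor arguments (one per bit), so the head analysis offers two recursive branches $s_0,s_1$ (resp.\ $f_0,f_1$) and reproduces exactly the Church (resp.\ Scott) words. I expect the rule-permutation step to be the only delicate part; once the typing is reduced to a syntax-directed form, the atom-production argument is entirely routine.
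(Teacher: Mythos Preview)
Your approach is correct and essentially the same as the paper's: induction on the size of the normal term after a generation/inversion step, exploiting that the fresh atom can only be produced via the designated head variables; the paper is merely terser, declaring the Church case ``standard'' and, for Scott numerals, asserting directly the canonical shape of the derivation rather than spelling out the rule-permutation you describe. One minor slip: in the Church case you write that ``affineness forces $x$ to occur,'' but affinity gives \emph{at most} one occurrence, not at least one---the reason $x$ must appear is the type-theoretic one you already identified (it is the unique ground inhabitant of the fresh atom $\alpha$, so the spine $f^k(\cdot)$ has nowhere else to terminate).
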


\begin{proof} The claim is standard for Church numerals. So let us 
focus on Scott numerals. The following derivations show that 
$\vdash \scott{n}: \scott{N}$ for every natural number $n$.
$$
\infer{\vdash \scott{0} : \scott{N}}{
\infer{\vdash \scott{0} : 
\forall\alpha.(\scott{N}\multimap \alpha)\multimap (\alpha\multimap \alpha)}{
\infer={\vdash \scott{0} : (\scott{N}\multimap \alpha)\multimap (\alpha\multimap \alpha)}{
\infer{;x: \scott{N}\multimap \alpha, y: \alpha \vdash y:\alpha}{;y: \alpha \vdash y:\alpha}}}}
\qquad
\infer{\vdash \scott{(n+1)} : \scott{N}}{
\infer{\vdash \scott{(n+1)} : 
\forall\alpha.(\scott{N}\multimap \alpha)\multimap (\alpha\multimap \alpha)}{
\infer={\vdash \scott{(n+1)} : (\scott{N}\multimap \alpha)\multimap (\alpha\multimap \alpha)}{
\infer{;x: \scott{N}\multimap \alpha, y: \alpha \vdash x\scott{n}:\alpha}{
\infer={;x: \scott{N}\multimap \alpha, y: \alpha \vdash 
x: \scott{N}\multimap \alpha}{} 
&
\vdash \scott{n}:\scott{N}}}}}
$$
For the other direction, we proceed by induction on the size of $t$.
Suppose that 
$\vdash t : \scott{N}$. Since
$t$ is in $\beta$-normal form,
the last part of the derivation must be necessarily 
of the form 
$$
\infer{\vdash \lambda xy. t_0: \scott{N}}{
\infer{\vdash \lambda xy. t_0: \forall \alpha.(\scott{N}\multimap \alpha)\multimap(\alpha\multimap\alpha)}{
\infer={\vdash \lambda xy. t_0: (\scott{N}\multimap \alpha)\multimap(\alpha\multimap\alpha)}{
;x:\scott{N}\multimap \alpha, y:\alpha\vdash t_0: \alpha}}}
$$
and $t= \lambda xy. t_0$. Since $t_0$ is no more an abstraction, 
it must be either $y$ or of the form $x t_1$ with $\vdash t_1 : \scott{N}$.
In the former case
 we have $t = \scott{0}$, while in the latter case we may apply the
induction hypothesis to obtain 
$t_1 = \scott{n}$ for some $n$. Hence $t = \lambda xy.x \scott{n}
= \scott{(n+1)}$.
\end{proof}

Let us come back to Figure \ref{f-encoding}.
As usual, Church numerals $\church{n}, \church{m}$ can be multiplied
by composition $\church{n} \circ \church{m} = \lambda f. \church{n}(\church{m}f)$.
This can be repeated arbitrary many but fixed times,
so we naturally obtain terms
$\church{mult}$ and $\church{mon}_n$ representing multiplication and
monomial $x\mapsto x^n$ of degree $n$. 
On the other hand,
it is not possible to encode exponentiation, since 
it requires of 
instantiation of $\alpha$ with a non-linear formula
such as  $\church{N}$, that is not allowed in $\DIAL$.

Turning on to the Scott numerals and words, observe that 
they are affine linear, and
admit constant time successor $\scott{succ}$ and predecessor
$\scott{pred}$
in contrast to Church.

Every finite set of cardinality $n$ can be represented by $\scott{B}_n$,
and the tensor product of two linear formulas by $L \otimes M$.
These allow us to \emph{linearly} 
represent the decomposer $\scott{dec}$, which works 
as follows: $\scott{dec}(\scott{iw}) = \scott{b}_i \otimes \scott{w}$
for $i\in \{0,1\}$
and $\scott{dec}(\scott{\epsilon}) = \scott{b}_2 \otimes \scott{\epsilon}$.

Given these building blocks, it is routine to encode 
the transition function of a Turing 
machine by a term of \emph{linear} type $L \multimap L$.
It can then be iterated by means of 
$\church{iter} :
 \church{N} \Rightarrow (L \multimap L) \Rightarrow
 (L \multimap L)$.
Combining it with $\church{mon}_n$ and other ``administrative'' operations,
we obtain an 
encoding of arbitrary polynomial time Turing machines.

\begin{figure}
$
\begin{array}{|rclrcl|}
\multicolumn{6}{l}{\mathbf{Church\ numerals\ and\ words:}} \\
\hline
\church{N} & \equiv & 
\forall  \alpha (\alpha\multimap\alpha)\Rightarrow (\alpha\multimap\alpha) 
&
\church{W} & \equiv &
\forall \alpha(\alpha\multimap\alpha)\Rightarrow (\alpha\multimap\alpha)\Rightarrow (\alpha\multimap\alpha) \\
\church{n} &=& \lambda fx.\underbrace{f(...f}_{n\,\, times}(x)...) &
\church{w} &=& \lambda f_0.\lambda f_1.\lambda x.f_{i_1}(f_{i_2}(...(f_{i_n}(x)...)))\\
	\church{mult} & \equiv & \lambda x y \lambda f.
x(yf) :   \church{N} \Rightarrow \church{N} \Rightarrow \church{N}
&
  \church{mon}_n & \equiv & \lambda x\lambda f.
\underbrace{x( \cdots (x}_{n\ times} f)\cdots ):
 \church{N} \Rightarrow \church{N}\\
\hline
\multicolumn{6}{l}{\mathbf{Scott\ numerals\ and\ words:}} \\
\hline
\scott{N} & \equiv & \mu\beta \forall \alpha  (\beta \multimap \alpha) 
\multimap (\alpha \multimap \alpha) &
\scott{W} & \equiv & \mu\beta \forall \alpha  (\beta \multimap \alpha) \multimap (\beta\multimap\alpha)  \multimap (\alpha \multimap \alpha) \\
\scott{0} &=& \lambda xy.y & 
\scott{\epsilon} &=& \lambda xyz.z \\
\scott{(n+1)} & = & \lambda xy.x(\scott{n}) &
\scott{(0w)}& = & \lambda xyz.x(\scott{w}) \\
 & & &  \scott{(1w)}& = & \lambda xyz.y(\scott{w}) \\
 \scott{succ} & = & \lambda z.\lambda xy.xz : \scott{N}\multimap \scott{N} &
 \scott{pred} & = & \lambda z.z(\lambda x.x)(\scott{0}): 
\scott{N}\multimap \scott{N}\\
\hline
\multicolumn{6}{l}{\mathbf{Finite\ sets\ and\ tensor\ product:}}\\
\hline
\scott{B}_n & \equiv & 
\forall \alpha. \underbrace{\alpha \multimap ... \alpha\multimap}_{n\,\, times}
 \alpha & 
L \otimes M & \equiv & \forall \alpha. 
(L\multimap M\multimap \alpha)\multimap \alpha \\
\scott{b}_i & \equiv & \lambda x_0 \cdots x_{n-1}. x_i 
%\quad (i\in\{0, \dots, n-1\}) 
&
t \otimes u & \equiv & \lambda x. xtu 
\quad (t: L, \ u: M)\\
\hline
\multicolumn{6}{l}{\mathbf{Decomposer\ and\ iteration:}}\\
\hline
\scott{dec} & = &
\multicolumn{4}{l|}{
\lambda z. z(\lambda y. \scott{b}_0 \otimes y)
(\lambda y. \scott{b}_1 \otimes y)(\scott{b}_2 \otimes \scott{\epsilon})
\ : \ \scott{W} \multimap \scott{B}_3 \otimes \scott{W}}\\
\church{iter} & = & 
\multicolumn{4}{l|}{
\lambda xfg. xfg \ : \ \church{N} \Rightarrow (L \multimap L) \Rightarrow
 (L \multimap L)}\\
\hline
%\text{let }x\text{ be } y\otimes z \text{ in } t & \equiv & x(\lambda yz.t).
\end{array}
$
\caption{Basic encodings}\label{f-encoding}
\end{figure}

\begin{theorem}[\class{FP}-completeness]
\label{completeness_theo}
For every polynomial time function $f:
\{0,1\}^* \rightarrow \{0,1\}^*$, there exists a 
$\lambda$-term $t_f$ of type $\church{W} \Rightarrow \scott{W}$ 
in $\DIAL$.
Given $w \in \{0,1\}^*$, we have
$\Val{t_f \church{w}}_\beta = \scott{f(w)}$.
\end{theorem}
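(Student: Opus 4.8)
The plan is to fix, for the given polytime function $f$, a Turing machine $M$ computing $f$ within time $p(n)$ on inputs of length $n$, where $p$ is a polynomial with non-negative integer coefficients, and to simulate $M$ inside $\DIAL$ by exploiting the two tiers. The Church tier $\church{W}$ will drive the iteration (supplying the time bound $p(|w|)$ as a Church numeral), while the Scott tier carries the tape data, so that each simulation step is performed \emph{linearly}. Concretely, I would first choose a linear configuration type such as $L \equiv \scott{W} \otimes \scott{B}_k \otimes \scott{W}$, pairing the reversed left tape, the current state (an element of the finite set $\scott{B}_k$), and the right tape whose first symbol is under the head, using the linear constructions $\otimes$ and $\scott{B}_k$ of Figure \ref{f-encoding}. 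With the decomposer $\scott{dec}$ reading and popping the head symbol, case analysis over $\scott{B}_k$ branching on the state, and bit constructors $\scott{c}_0, \scott{c}_1 : \scott{W} \multimap \scott{W}$ pushing symbols, one encodes the one-step transition as $\delta : L \multimap L$, every component being consumed and rebuilt exactly once. I would also fix linear terms $\scott{init} : \scott{W} \multimap L$ building the initial configuration and $\scott{extract} : L \multimap \scott{W}$ reading the output word off a halting configuration.

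The remaining ingredients live on the Church side and use that $\church{W}$ is a non-linear type, so $w$ may be duplicated by (Contr). Feeding both constructor slots the same $f$ (legal because they are passed through $\Rightarrow$), the term $\lambda w\,\lambda f.\, w\,f\,f$ has type $\church{W} \Rightarrow \church{N}$ and sends $\church{w}$ to $\church{|w|}$; composing with $\church{mult}$, $\church{mon}_n$ and routine addition and constant terms yields $\church{poly} : \church{W} \Rightarrow \church{N}$ with $\church{poly}\,\church{w} \rightarrow^* \church{p(|w|)}$. Independently, instantiating $\church{w}$ at the linear type $\scott{W}$ and applying it to $\scott{c}_0, \scott{c}_1, \scott{\epsilon}$ rebuilds the Scott word, giving a conversion of type $\church{W} \Rightarrow \scott{W}$ whose value is $\scott{w}$. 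I then assemble
\[
t_f \;=\; \lambda w.\; \scott{extract}\,\bigl(\church{iter}\,(\church{poly}\, w)\;\delta\;\bigl(\scott{init}\,(w\,\scott{c}_0\,\scott{c}_1\,\scott{\epsilon})\bigr)\bigr),
\]
which has type $\church{W} \Rightarrow \scott{W}$, since $\church{iter} : \church{N} \Rightarrow (L \multimap L) \Rightarrow (L \multimap L)$ instantiated at $\alpha = L$ iterates $\delta$ exactly $p(|w|)$ times on the initial configuration.

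The main obstacle, deserving genuine care, is twofold. First, realizing a full step of $M$ as a \emph{linear} $\delta : L \multimap L$: the reading and branching performed by $\scott{dec}$ and by case analysis over $\scott{B}_k$ must neither duplicate nor silently discard tape data, and head motion must be expressed as popping a bit from one tape and pushing it onto the other, all in constant time so that the single-step cost does not spoil the polynomial bound. Second, the correctness of the iteration count: since $p(|w|)$ may exceed the actual running time of $M$, I must arrange that $\delta$ fixes halting configurations, so that over-iterating past the halting point is harmless. Given this, together with the hypotheses that $M$ halts within $p(|w|)$ steps and outputs $f(w)$, the identity $\Val{t_f\,\church{w}}_\beta = \scott{f(w)}$ follows by a straightforward induction tracking the simulated configuration after each application of $\delta$. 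The remaining work — the typings of $\church{poly}$, of the Church-to-Scott conversion, and of the final assembly — is bookkeeping within the rules of Figure \ref{typing_rules}.
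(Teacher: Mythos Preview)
Your proposal is correct and follows essentially the same route as the paper: encode configurations as a linear tensor $\scott{B}_k \otimes \scott{W} \otimes \scott{W}$, build a linear one-step transition $\delta : L \multimap L$ using $\scott{dec}$ and case analysis on $\scott{B}_k$, compute the Church length and a polynomial time bound via $\church{mult}$ and $\church{mon}_n$, coerce the input to a Scott word by $w\,\scott{c}_0\,\scott{c}_1\,\scott{\epsilon}$, and drive the simulation with $\church{iter}$. Your explicit attention to making halting configurations fixed points of $\delta$ (so that over-iteration is harmless) is a detail the paper leaves implicit but is indeed needed.
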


A couple of remarks are in order.
\begin{itemize}
\item  Both Church and Scott numerals/words
can be generalized to 
lists, trees and their combinations. 
It is indeed an advantage of the polymorphic setting that 
there is a generic means to build various data types.
Moreover, we may consider
for instance the Church lists of Scott numerals. 

\item In view of the fact that our system is derived from 
$\LM$ of 
\cite{leivant1993lambda},
one may wonder whether it is possible to give a direct translation of
$\LM$ into $\DIAL$ for proving \class{FP}-completeness.
It is, however, not 
straightforward because 
$\LM$ is not sensitive to the distinction between 
linear and non-linear arrows, that is crucial for our system.
In particular, our Church numerals 
%(of type 
%$\forall  \alpha (\alpha\multimap\alpha)\Rightarrow (\alpha\multimap\alpha)$)
only allow 
iteration of linear functions $L\multimap L$, while 
the Church numerals of $\LM$ 
allow iteration of
non-linear
functions as well.
\end{itemize}

The rest of this paper is concerned with the converse
of Theorem \ref{completeness_theo}.
Namely, we prove:

\begin{theorem}[\class{FP}-soundness]
\label{soundness_theo}
For every $\lambda$-term $t$ of type $\church{W} \Rightarrow \scott{W}$, 
the associated function 
$f_t : \{0,1\}^* \rightarrow \{0,1\}^*$ defined by $f_t(w_1) = w_2$ 
$\Leftrightarrow$ $\Val{ t \church{w_1}}_\beta = \scott{w_2}$
 is a polynomial time function.
\end{theorem}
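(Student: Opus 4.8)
The plan is to interpret $\DIAL$ in the resource-sensitive realizability semantics of Dal Lago and Hofmann and, from the realizability of a closed term of Scott type, extract a polynomial bound on its cost $TS$; that bound is then fed to the evaluator $M_{eval}$ of Theorem \ref{t-turing} to obtain a polynomial-time algorithm for $f_t$. Concretely, I would first fix a (partial) resource monoid $\mathcal{M}$ whose elements are higher-order polynomials, ordered by a majorization relation and equipped with a map $|\cdot|\colon\mathcal{M}\to\N$ reading off the $TS$-cost they certify. The monoid operation is tailored so that it dominates the additive behaviour of $TS$ under the term constructors, as quantified by the (size), (weakening), (concatenation) and (identity) clauses of Lemma \ref{value_lemma}; since we do not insist that realizers be values, the bounds will apply directly to arbitrary $\lambda$-terms.

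Next I would define, for each formula $A$, a set of realizers — pairs $(t,\phi)$ with $t$ a $\lambda$-term and $\phi\in\mathcal{M}$ a majorizer — by induction on $A$. The linear arrow $L\multimap M$ is interpreted as the usual function space, the non-linear arrow $A\Rightarrow B$ additionally permits its argument to be duplicated, the quantifier $\forall\alpha A$ is interpreted by intersection over all interpretations of $\alpha$, and the fixpoint $\mu\alpha L$ by the least fixed point of the operator induced by $L$, which is monotone thanks to the positivity restriction $(*)$. For the two data types one checks directly that every realizer of $\church{W}$ evaluates to some $\church{w}$ and every realizer of $\scott{W}$ to some $\scott{w}$, with $|\scott{w}|$ (hence $TS$) bounded by the majorizer. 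The decisive structural point is that the iteration packaged in a Church word only ever iterates a \emph{linear} map $L\multimap L$, a number of times equal to the word length, so the intermediate Scott data stays of linear size and the accumulated cost is polynomial rather than exponential.

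The core of the argument is the adequacy theorem: whenever $\Gamma;\Delta\vdash t:A$ is derivable, $t$ (under any realizing substitution for its free variables) is realized at $A$ by a majorizer built uniformly from the derivation. I would prove this by induction on the typing derivation, one case per rule of Figure \ref{typing_rules}: the arrow rules follow from the function-space interpretations together with the (concatenation) and (identity) bounds, the quantifier and fixpoint rules are immediate from the intersection and least-fixed-point clauses, and (Weak) uses the (weakening) clause. Applying adequacy to the closed term $t\,\church{w}$ of type $\scott{W}$ yields a majorizer; since the majorizer of $\church{w}$ grows linearly in $|w|$ while $t$ contributes a constant, composing them bounds $TS(t\,\church{w})=Time(t\,\church{w})+|t\,\church{w}|$ by a fixed polynomial $p(|w|)$. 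By Theorem \ref{t-turing}, $M_{eval}$ then computes $\Val{t\,\church{w}}$ in time $O(p(|w|)^4)$; as this value is $\beta$-equal to the Scott word $\scott{f_t(w)}$, whose length is polynomial in $|w|$, one reads off $f_t(w)$ in polynomial time, e.g.\ by iterated application of the decomposer $\scott{dec}$, so $f_t\in\class{FP}$.

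The step I expect to be the main obstacle is the treatment of the non-linear arrow and of the (Contr) rule inside the resource monoid. Duplicating a subterm multiplies neither its size nor its cost by a constant but, by Lemma \ref{l-cost}, by roughly the number $c$ of occurrences times the value size; for the soundness bound to remain polynomial the monoid must absorb this blow-up into the majorizer of the duplicated argument without raising the polynomial degree uncontrollably. This is precisely where \emph{higher-order} polynomials are needed in place of the first-order bounds used for LAL or SAL, and where the (contraction) clause of Lemma \ref{value_lemma}, which gives $TS(t')\le TS(t)$ when two occurrences are merged and the argument is shared, does the essential work. Getting the definition of the monoid operation right so that the $\Rightarrow_e$ and (Contr) cases of adequacy go through cleanly is the delicate point of the whole proof.
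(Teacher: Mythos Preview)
The overall architecture you describe --- higher-order polynomials as majorizers, adequacy by induction on derivations, linear majorizers for Church words --- matches the paper closely, and your identification of the contraction/$\Rightarrow_e$ case as the delicate spot is apt. However, there is a genuine gap in the final paragraph, and it is precisely the obstacle the paper devotes Section~4 to.

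The realizability semantics is tied to the weak call-by-value cost $TS$. Consequently the bound $TS(t\,\church{w})\leq p(|w|)$ controls only the weak CBV value $\Val{t\,\church{w}}$, not the $\beta$-normal form $\Val{t\,\church{w}}_\beta=\scott{f_t(w)}$. Since weak CBV does not reduce under $\lambda$, $\Val{t\,\church{w}}$ is merely some abstraction $\beta$-equal to $\scott{f_t(w)}$; its size says nothing about the length of $f_t(w)$, because a small term can $\beta$-normalize to an arbitrarily large one. Your claim ``whose length is polynomial in $|w|$'' is therefore unjustified, and without it the proposed ``iterate $\scott{dec}$'' extraction has no polynomial termination bound. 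The assertion earlier that ``every realizer of $\scott{W}$ [evaluates] to some $\scott{w}$, with $|\scott{w}|$ \ldots\ bounded by the majorizer'' conflates $\Val{\cdot}$ with $\Val{\cdot}_\beta$ in the same way.

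The paper closes this gap with two separate devices. First, it builds auxiliary terms $\scott{bit}_i:\scott{W}\multimap\scott{B}_3$ so that each individual bit of the output is produced by a term whose weak CBV evaluation really does terminate on a closed constant $\scott{b}_j$, with time bounded by $P(|w|)+ik$ via the first realizability. Second --- and this is the missing idea --- to bound how many $i$'s are needed (i.e.\ the output length), the paper develops a \emph{second} realizability interpretation based on $\sharp\Val{\cdot}_\beta$ rather than $TS$, with its own saturated sets and adequacy theorem; this yields a polynomial $Q$ with $\sharp\Val{t\,\church{w}}_\beta\leq Q(|w|)$. Only by combining the time bound $P$ with the length bound $Q$ does one obtain a polynomial-time algorithm. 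One CBV-based adequacy theorem is not enough.
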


Altogether, these two theorems ensure 
that the terms of type $\church{W}\Rightarrow\scott{W}$ in $\DIAL$
 precisely capture the class \class{FP} 
of polynomial time functions.

\section{Resource sensitive realizability}
\label{section_3}

We now develop a resource sensitive realizability semantics
for $\DIAL$
inspired by \cite{dal-semantic}. It concerns with the 
realizability relation $t, p \Vdash_\eta A$, where $A$ is a formula
to be realized, $\eta$ is a valuation of propositional variables, 
and $t$ is 
a $\lambda$-term, called a \emph{realizer}, that embodies the 
computational content of a given proof. The second component
$p$ is a higher order (additive) polynomial, called a \emph{majorizer},
that imposes a resource bound on $t$. Since we do not intend our model
to be categorical,  we do not include
the denotation of $t$ in the realizability relation 
(in contrast to the length space of \cite{dal-semantic}).

%The first step of the soundness proof consists of defining 
%a realizability relation $\Vdash$ between untyped realizers 
%(closed $\lambda$-terms defined in \ref{computational}) and 
%formulas from $\DIAL$. In order to study the computational 
%behavior of realizers, we will use both the weak call-by-value 
%introduced in section \ref{computational} and its cost model. 
%To prove that these realizers enjoy various complexity and 
%size properties, we add to the realizability relation majorizers 
%to bound time complexity. These features make our work really 
%close to the previous work of Dal Lago and Hofmann 
%in \cite{dal-semantic}. However it seems difficult to reuse 
%their framework exactly in the same way, especially because 
%our majorizers are typed and do not fit simply their resource monoid structure.
We then show the adequacy theorem, ensuring that $\DIAL$ is sound
with respect to the realizability semantics. 
%system by proving the adequacy theorem. 
%As a corollary, 
%we obtain that the weak call-by-value normal form of every 
%$W \Rightarrow W_S$ typable term can be computed in polynomial time, 
%which is still not soundness (we need to compute the $\beta$-normal form) 
%but will be useful.
 
\subsection{Higher order polynomials}

We begin with the description of majorizers, namely higher order 
polynomials. Actually they are just monotone additive terms (without 
multiplication), but we nevertheless call them polynomials,
since they will indeed serve as 
polynomials bounding the runtime of realizers 
(see Theorem \ref{weak_sound}).
Using higher order polynomials rather than first order ones 
will allow us to capture the difference between linear and non-linear formulas. 

\begin{definition}[Higher order polynomials]
We consider simple types $\sigma, \tau, \dots$ 
defined by $\sigma :: = o\ |\ 
\tau \rightarrow \tau$, where $o$ is the only base type.
A \emph{higher order polynomial} $p$ 
is a $\lambda$-term built from constants $n : o$ (for every 
natural number $n$) and $+ : o \rightarrow o \rightarrow o$. More precisely,
given a set $V(\sigma)$ of variables for each simple type $\sigma$, 
they are built as follows:
$$
\infer{x : \sigma}{x\in V(\sigma)}
\quad\quad
\infer{pq: \tau}{p: \sigma\rightarrow \tau & q: \sigma}
\quad\quad
\infer{\lambda x.p : \sigma\rightarrow \tau}{x \in V(\sigma) & p: \tau}
\quad\quad 
\infer{n: o}{n \in \mathbb{N}}
\quad\quad
\infer{+: o\rightarrow o\rightarrow o}{}
$$
We denote by $\Pi$ the set of closed higher order polynomials.
\end{definition}

The role of higher order polynomials is to impose a static, quantitative bound
on realizers. Hence we identify them 
by $\alpha\beta\eta$-equivalence
and natural arithmetical equivalences. For instance, we identify 
$x+ y = y+ x$ and $2 + 3 = 5$. We often write $p(q_1, \dots, q_n)$ 
for $p q_1 \cdots q_n$.
If $p:o$ and $c\in \N$, we write 
$c p$ for $p + \cdots + p$ ($c$ times).

We extend addition to higher-order terms so that one can sum up 
two terms at least when one of the summands is of base type $o$.
Formally, 
let $\tau = \tau_1 \rightarrow ... \rightarrow \tau_k \rightarrow o$
and $p:\tau$. If $q:o$, 
we denote by $p+q$
the term $\lambda x_1\cdots x_k.(p(x_1,...,x_k) + q)$. 

We also define a lowering operator which brings a higher order term
down to a base type one. It will allow majorizers of higher order type
to bound concrete resources such as time and size.
$$
\begin{array}{rcll}
0_\tau & = & 
\lambda x_1 \cdots x_k.0, &
\mbox{where }
\tau = \tau_1 \rightarrow ... \rightarrow \tau_k \rightarrow o;
\\
\downarrow p & = & p 0_{\tau_1} \cdots 0_{\tau_k}, &
\mbox{where }
p: \tau = \tau_1 \rightarrow ... \rightarrow \tau_k \rightarrow o.
\end{array}
$$
Observe that $\downarrow p$ is a natural number if $p$ is a closed
higher order polynomial. Notice also that $\downarrow p = p$ if 
$p : o$.

Formulas of $\DIAL$ are mapped to types of higher order polynomials 
as follows:
%We now define a mapping $o$ from the formulas to the monotonic additive types. This mapping allows us to capture the structure of the formulas and transpose it to the higher-order polynomials. It is defined by:
$$
o(L)= o, \qquad
o(L\multimap A) = o(A), \qquad
o(A\Rightarrow B) = o(A) \rightarrow o(B),
\qquad o(\forall \alpha\, A) = o(A).
$$
Thus all linear formulas collapse to $o$, while non-linear formulas
retain the structure given by non-linear arrows.

\begin{remark}\label{r-monoid}
Consider $\mathcal{M}=(\Pi, +, \leq, D)$ 
where $p_1+p_2$ is a partial operation defined only 
when one of the $p_i$ is of type $o$, $p\leq q$ 
iff $\downarrow p \leq \downarrow q$ and $D(p,q) = \downarrow q - \downarrow p$. Then $\mathcal{M}$ gives rise to a \emph{partial resource monoid}, 
namely a partial monoid that satisfies all the axioms of resource monoids given by 
\cite{dal-semantic}. 

It would be desirable to have a \emph{total} resource monoid
so that 
the basic results of \cite{dal-semantic} would be reused for our purpose.
However, we have no idea how to do that coherently. This problem is related to
the above mentioned restriction on $\DIAL$
that the premise of a linear implication 
must be a linear formula.
\end{remark}

\subsection{Realizability relation}

We are now ready to introduce the realizability relation. 
Intuitively, $t,p \Vdash A$ signifies that $A$ is the specification of $t$ and
$p$ majorizes the potential cost for evaluating $t$ when 
it is applied to some arguments.
%We will see that $p$ must have the same structure as the formula $A$.

Let us begin with some notations.
\begin{itemize}
\item $\overline{x}$, $\overline{t}$, $\overline{A}$ stand for
(possibly empty) lists of variables, terms and formulas, respectively. 
\item 
$t \langle u_1/x_1, \dots, u_n/x_n\rangle$ denotes the term
$(\lambda x_1\cdots x_n. t)u_1 \cdots u_n$.
\item  $\theta, \xi$ stand for lists 
of binding expressions; for instance, 
$\theta  = u_1/x_1, \dots, u_n/x_n$ with $x_1, \dots, x_n$ distinct. 
This allows us to concisely write 
$t \langle \theta\rangle$ for
$t \langle u_1/x_1, \dots, u_n/x_n\rangle
= (\lambda x_1\cdots x_n. t)u_1 \cdots u_n$.
%, and 
%$t[\theta]$ for $t[u_1/x_1, \dots, u_n/x_n]$.
\end{itemize}

\begin{definition}{(Saturated sets)}
Let $\tau$ be a type for higher order polynomials.
A nonempty set $X \subseteq \Lambda \times \Pi$ is 
a \emph{saturated set of type} $\tau$ if
whenever $(t,p) \in X$, we have $t\Downarrow$,
$p$ is a closed higher order polynomial 
of type $\tau$ and
the following hold:
\begin{description}
\item[(bound)] $TS(t) \leq \downarrow p$.
\item[(monotonicity)] $(t,\; p+n)\in X$ for every $n\in \N$.
\item[(exchange)] If $t= t_0\langle \theta,v_1/y_1, v_2/y_2, \xi\rangle\vu$,
then $(t_0\langle \theta,v_2/y_2, v_1/y_1, \xi\rangle\vu,\; p)\in X$.
\item[(weakening)] If $t= t_0\langle \theta\rangle\vu$, $z\not\in FV(t_0)$
and $w\Downarrow$,
then $(t_0\langle \theta,w/z\rangle\vu,\; p + TS(w)+2)\in X$.
\item[(contraction)] If $t= t_0\langle \theta,w/z_1, w/z_2\rangle\vu$,
then $(t_0[z/z_1, z/z_2]\langle \theta,w/z\rangle\vu,\; p)\in X$.
\item[(concatenation)] If $t= (t_0\langle \theta\rangle)
(t_1\langle \xi\rangle)\vu$, then 
$((t_0t_1)\langle\theta,\xi\rangle\vu,\; p)\in X$.
%, where 
%the variables in $\theta$ and $\xi$ are assumed to be distinct.
\item[(identity)] If $t= t_0 \vu$, then 
$((x\langle t_0/x\rangle)\vu,\; p+3) \in X$.
\end{description}
\end{definition}

By Lemma \ref{value_lemma} (size), condition (bound) implies 
that $|\Val{t}| \leq \downarrow p$.
Note that condition (weakening) asks for 
an additional cost $TS(w)+ 2$.
This is due to our computational model: weak call-by-value reduction
$(\lambda x.t)w \rightarrow t[w/x]$ requires that $w$ is a value, even when
$x\not \in FV(t)$.

We have to show that there exists at least one saturated set.
The following proposition gives the canonical one.
\begin{prop}
$X_0 = \{ (t,n) : t\Downarrow \mbox{ and } TS(t) \leq n\}$ 
is the greatest saturated set of type $o$.
\end{prop}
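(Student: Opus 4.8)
The plan is to prove the two halves of ``greatest saturated set'' separately: that $X_0$ is itself a saturated set of type $o$, and that it contains every saturated set of type $o$. The second (``greatest'') half is immediate and I would dispatch it first. If $X$ is any saturated set of type $o$ and $(t,p)\in X$, then $p$ is a \emph{closed} higher order polynomial of type $o$, so after the arithmetic identifications $p$ is a natural number $n$; the saturation conditions give $t\Downarrow$, and (bound) gives $TS(t)\leq\ \downarrow p = n$, using $\downarrow p = p$ at type $o$. Hence $(t,n)\in X_0$, so $X\subseteq X_0$. Combined with the first half this shows $X_0$ is the greatest, so the real work is checking that $X_0$ is saturated.

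Nonemptiness is witnessed e.g. by $(\lambda x.x,\,2)\in X_0$ (here $TS(\lambda x.x)=2$), and by construction every pair in $X_0$ has the required form. Conditions (bound) and (monotonicity) hold directly: (bound) is the defining inequality $TS(t)\leq n=\ \downarrow n$, and (monotonicity) follows since $TS(t)\leq n$ entails $TS(t)\leq n+m$. The remaining five conditions are exactly the contextual closures of the five matching clauses of Lemma~\ref{value_lemma}. For each, I would take the transformed term $t'$, apply the corresponding clause to the displayed redex, and read off the needed $TS$-relation: (exchange) and (concatenation) preserve $TS$ (keeping the same bound $n$); (contraction) can only decrease it (same bound $n$); (identity) adds exactly $3$ (matching $p+3$); and (weakening) adds exactly $TS(w)+2$ (matching $p+TS(w)+2$, with the hypothesis $w\Downarrow$ being precisely what that clause requires). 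In each case one must also confirm $t'\Downarrow$ so that $TS(t')$ is defined; this holds because the rearrangements correspond to valid reorderings of the weak call-by-value reduction of $t$, so $t'$ normalizes whenever $t$ does.

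The main obstacle is that Lemma~\ref{value_lemma} is stated for \emph{bare} redexes, whereas the saturated-set conditions wrap them in a prefix and suffix of bindings $\theta,\xi$ together with a list $\vu$ of trailing applications, as in $t=t_0\langle\theta,v_1/y_1,v_2/y_2,\xi\rangle\vu$. Bridging this gap is the crux. I would exploit the uniqueness of the cost $n$ in $t\stackrel{n}{\Arr}\Val{t}$, which makes $Time$ additive and independent of the reduction order, to factor the evaluation of $t$ into the cost of processing the ambient bindings and applications plus the cost of the sub-redex being transformed. Since the two sides of each condition differ only in that sub-redex while sharing an identical context, the ambient cost cancels and the $TS$-relation reduces to the bare statement of Lemma~\ref{value_lemma}. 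For (exchange) between non-adjacent positions this is obtained by iterating the adjacent-swap case, each step leaving $TS$ unchanged.
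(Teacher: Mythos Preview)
Your proposal is correct and follows the same approach as the paper: verify (bound) and (monotonicity) directly from the definition, derive the remaining closure conditions from Lemma~\ref{value_lemma}, and note that greatestness is immediate from (bound). The paper's proof is a one-line appeal to Lemma~\ref{value_lemma} and does not spell out the contextual bridging you flag as the ``main obstacle''; your factor-the-cost argument via uniqueness of $Time$ is a legitimate way to fill that gap, though in practice each case can also be checked by a direct size-and-step count as in the proof of Lemma~\ref{value_lemma} itself.
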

\begin{proof}
Conditions (bound) and (monotonicity) hold by definition.
The other conditions follow from Lemma \ref{value_lemma}.
$X_0$ is obviously greatest.
\end{proof}

A \emph{valuation} $\eta$ maps each propositional variable $\alpha$ 
to a saturated set $\eta(\alpha)$
of type $o$. 
$\eta\{\alpha \leftarrow X\}$ stands for 
a valuation which agrees with $\eta$ except that it  assigns $X$
to $\alpha$.

\begin{definition}{(Realizability)}
\label{real_def}
We define the relation $t,p \Vdash_\eta A$, 
where $t\in \Lambda$ (called \emph{realizer}), 
$p$ is a closed higher order polynomial of type $o(A)$ (called \emph{majorizer})
and $\eta$ is a valuation. 
It induces the set $\hat{A}_\eta =
\{ (t,p) : t,p\Vdash_\eta A\}$.
The definition proceeds by induction on $A$.
\begin{itemize}
\item[$\bullet$] $t,\,n\Vdash_\eta \alpha$ iff $(t,\,n)\in \eta(\alpha)$.
\item[$\bullet$] $t,\,p\Vdash_\eta L \multimap A$ 
iff $TS(t)\leq \downarrow p$ and 
$u, m \Vdash_\eta L$ implies $tu, p+m \Vdash_\eta A$ for every $u, m$.
\item[$\bullet$] $t,\, p \Vdash_\eta B \Rightarrow A$ 
iff $TS(t)\leq \downarrow p$ and 
$u, q \Vdash_\eta B$ implies $tu, p(q) \Vdash_\eta A$
for every $u, q$.
\item[$\bullet$] $t,\, p \Vdash_\eta \forall \alpha A$ iff 
$t,\, p \Vdash_{\eta\{\alpha \leftarrow X\}} A$
for every saturated set $X$ of type $o$.
\item[$\bullet$] $t,\, p \Vdash_\eta \mu \alpha L$ iff 
$(t,p) \in X$ for every saturated set $X$ of type $o$ 
such that $\hat{L}_{\eta\{\alpha \leftarrow X\}} \subseteq X$.
%$t,\, p \Vdash_{\eta\{\alpha \leftarrow X\}} A$
%for every saturated set $X$ of type $o$.
\end{itemize}
%We write $t,p \Vdash A$
%if $t,p \Vdash_\eta A$ holds for every valuation $\eta$.
\end{definition}

\begin{lemma}\label{l-real}~
\begin{enumerate}
\item For every formula $A$,
$\hat{A}_\eta = \{ (t,p) : t,p\Vdash_\eta A\}$ is a saturated set of 
type $o(A)$.

\item For every $A$ and $L$, we have
  $t,p \Vdash_\eta A[L / \beta]$  iff  
$t,p \Vdash_{\eta\{\beta \leftarrow \hat{L}_\eta\}} A$.

\item If $t, p \Vdash_\eta \forall \alpha A$, then
$t, p \Vdash_\eta  A[L/\alpha]$ for every linear formula $L$.

\item $\hat{\mu\alpha L}_\eta$ is the least fixpoint of $f(X) = 
\hat{L}_{\eta\{\alpha \leftarrow X\}}$.

\item $t, p \Vdash_\eta \mu \alpha L$ iff
$t, p \Vdash_\eta  L[\mu\alpha L/\alpha]$.
\end{enumerate}
\end{lemma}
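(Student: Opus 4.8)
The plan is to prove all five parts by induction on the structure of $A$ (and of $L$), reusing the saturation of the immediate subformulas as induction hypothesis, and to settle the fixpoint clauses by a Knaster--Tarski argument. For \emph{Part 1} the base case $A=\alpha$ holds because $\hat\alpha_\eta=\eta(\alpha)$ is saturated of type $o$ by the definition of a valuation. For $A=L\multimap B$ and $A=B\Rightarrow C$, condition (bound) and the type (namely $o(B)$, resp.\ $o(B)\rightarrow o(C)$) are read off directly from Definition \ref{real_def}, and the remaining conditions are inherited from $\hat B_\eta$ (resp.\ $\hat C_\eta$), saturated by the induction hypothesis. The single observation that drives every case is that each structural operation rewrites the realizer $t$ into some $t'$ built by the same operation followed by an unchanged trailing list; so if $t,p\Vdash_\eta L\multimap B$ and $u,m\Vdash_\eta L$, then $tu$ and $t'u$ differ precisely by that operation applied with the trailing list $\vu$ extended by $u$, and the corresponding condition on $\hat B_\eta$ delivers $t'u$ with the correct majorizer. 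For (monotonicity) one uses the extended additions $(p+n)(q)=p(q)+n$ and $(p+m)+n=(p+n)+m$, so the added constant commutes past the argument's majorizer; for (weakening) the surcharge $TS(w)+2$ is absorbed identically. I would write out (weakening) once in full and dispatch (exchange), (contraction), (concatenation), (identity) by the same pattern. The quantifier and fixpoint cases reduce to closure under intersection: since $\hat{\forall\alpha A}_\eta=\bigcap_X \hat A_{\eta\{\alpha\leftarrow X\}}$ and $\hat{\mu\alpha L}_\eta=\bigcap\{X:\hat L_{\eta\{\alpha\leftarrow X\}}\subseteq X\}$, each member is saturated (by the hypothesis, resp.\ by Part 1 on $L$), every closure condition is stable under arbitrary intersection, the relevant type is $o(A)$ resp.\ $o$, and all such intersections sit below the greatest saturated set $X_0$.

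For \emph{Part 2} I would induct on $A$: the base $A=\beta$ gives $L$ on the left and membership in $\hat L_\eta$ on the right by definition, a variable $\gamma\neq\beta$ is immediate, and the connective, quantifier and fixpoint cases commute with the valuation update once bound variables are renamed apart from $\beta$ (so that $\alpha\notin FV(L)$ makes the two valuation modifications independent). The linearity of $L$ enters exactly here: $o(L)=o=o(\beta)$ guarantees $o(A[L/\beta])=o(A)$, so the majorizer types agree on both sides. \emph{Part 3} is then immediate---instantiate the universal in $t,p\Vdash_\eta\forall\alpha A$ at the particular saturated set $X=\hat L_\eta$ (saturated of type $o$ by Part 1 together with $o(L)=o$) and rewrite by Part 2.

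\emph{Parts 4 and 5} are the heart of the matter. Working in the poset of saturated sets of type $o$ ordered by inclusion (with top $X_0$), set $f(X)=\hat L_{\eta\{\alpha\leftarrow X\}}$. The essential step is that $f$ is monotone, proved by induction on $L$ using that $\alpha$ occurs only positively (condition $(*)$); the one delicate case is $M\multimap N$, where positivity forces $\alpha$ to be absent from the antecedent $M$ and covariant in $N$, so that enlarging $X$ enlarges $\hat N$ and hence $f(X)$. Given monotonicity, Part 4 is the textbook Knaster--Tarski argument: with $P=\bigcap\{X:f(X)\subseteq X\}$, every pre-fixpoint $X$ satisfies $P\subseteq X$, hence $f(P)\subseteq f(X)\subseteq X$ and so $f(P)\subseteq P$; monotonicity then yields $f(f(P))\subseteq f(P)$, making $f(P)$ a pre-fixpoint, whence $P\subseteq f(P)$ and $f(P)=P$ is the least fixpoint. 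Part 5 follows by unfolding: $t,p\Vdash_\eta\mu\alpha L$ iff $(t,p)\in P=f(P)=\hat L_{\eta\{\alpha\leftarrow\hat{\mu\alpha L}_\eta\}}$, which by Part 2 applied to the linear formula $\mu\alpha L$ is exactly $t,p\Vdash_\eta L[\mu\alpha L/\alpha]$.

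I expect the monotonicity of $f$ extracted from the positivity condition $(*)$ to be the conceptual obstacle, since it is the only place $(*)$ is genuinely needed and it is precisely what licenses Knaster--Tarski; the most laborious part, by contrast, will be discharging the five structural saturation conditions for the two implication cases of Part 1, where the argument-list bookkeeping and the majorizer arithmetic must be tracked with care.
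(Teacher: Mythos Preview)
Your plan is essentially the paper's own argument, only spelled out in more detail: induction on $A$ for Parts 1 and 2 (using that nonempty intersections of saturated sets are saturated for the $\forall$ and $\mu$ cases), Part 3 from 1 and 2, the Knaster--Tarski argument for Part 4 via monotonicity of $f$ coming from positivity of $\alpha$, and Part 5 by unfolding with Part 2.

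One small slip to fix: in the monotonicity argument for Part 4 you write that ``positivity forces $\alpha$ to be absent from the antecedent $M$'' in $M\multimap N$. That is not quite right---positivity of $\alpha$ in $M\multimap N$ only forces $\alpha$ to occur \emph{negatively} in $M$ (and positively in $N$), not to be absent. The standard remedy is to strengthen the induction hypothesis and prove simultaneously that $X\mapsto\hat L_{\eta\{\alpha\leftarrow X\}}$ is monotone when $\alpha$ is positive in $L$ and antitone when $\alpha$ is negative; then the $\multimap$ case goes through because the antecedent contributes contravariantly. With that adjustment your argument is complete.
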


\begin{proof}
1 and 2. By induction on $A$, noting that any non-empty 
intersection of
saturated sets is again saturated.\\
3. By 1 and 2.\\
4. Notice that $f$ is a monotone function since 
$\alpha$ occurs only positively in $L$. 
%Since the class of saturated sets 
%of type $o$ forms a complete lattice, the monotone function $f$
%possesses the least fixpoint. In more detail,
Call a saturated set $X$ of type $o$ a \emph{prefixpoint} of $f$ 
if $f(X)\subseteq X$. Then $\hat{\mu\alpha L}_\eta$
is the infimum of all prefixpoints of $f$. So 
$\hat{\mu\alpha L}_\eta \subseteq X$ for every prefixpoint $X$, 
and by monotonicity
$f(\hat{\mu\alpha L}_\eta) \subseteq f(X) \subseteq X$.
Since $\hat{\mu\alpha L}_\eta$ is the infimum of all such $X$'s, 
we obtain 
$f(\hat{\mu\alpha L}_\eta) \subseteq \hat{\mu\alpha L}_\eta$.
Applying monotonicity again, we get
$f(f(\hat{\mu\alpha L}_\eta)) \subseteq f(\hat{\mu\alpha L}_\eta)$,
so $f(\hat{\mu\alpha L}_\eta)$ is a prefixpoint of $f$. 
Hence 
$\hat{\mu\alpha L}_\eta \subseteq 
f(\hat{\mu\alpha L}_\eta)$. 

5. By 2 and 4, $t, p \Vdash_\eta \mu \alpha L$ iff
$(t,p) \in \hat{\mu\alpha L}_\eta$ iff
$(t,p) \in f(\hat{\mu\alpha L}_\eta$) iff
$t, p \Vdash_{\eta\{\alpha \leftarrow \hat{\mu\alpha L}_\eta\}} L$
iff $t,p \Vdash_\eta L[\mu\alpha L/\alpha]$.
\end{proof}

\subsection{Adequacy theorem}

The adequacy theorem is the crux of this paper. 
It states that $\DIAL$ is sound with respect to the 
realizability semantics we have introduced.

\begin{theorem}[Adequacy]
\label{adequacy_lemma}
Suppose that $\vx:\vC; \vy:\vM \vdash t: A$ is derivable.
Then there exists a higher order polynomial $p(\vx): o(A)$ 
with variables $\vx$ of type $o(\vC)$ such that 
for any valuation $\eta$ we have the following:
$$\vu, \vq\Vdash_\eta \vC,\ \vs, \vm\Vdash_\eta \vM
\Longrightarrow 
t\langle\vu/\vx, \vs/\vy\rangle, p(\vq)+\vm \Vdash_\eta A.$$
Moreover, if $\vx = x_1, \dots, x_a$
and each $x_i$ occurs $c_i$ times in $t$, then
$$
(*)\quad |t| + c_1 \downarrow q_1 + \cdots + 
c_a \downarrow q_a \leq \downarrow p(\vq).$$
\end{theorem}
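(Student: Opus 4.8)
The plan is to argue by induction on the derivation of $\vx:\vC;\vy:\vM\vdash t:A$, building the majorizer $p(\vx)$ rule by rule and verifying the realizability conclusion and the size bound $(*)$ simultaneously. Throughout, the saturation conditions — which hold in every $\hat{A}_\eta$ by the first part of Lemma \ref{l-real} — are what let me rearrange the administrative redexes created by the packaging $t\langle\theta\rangle$. For the axioms, $x\langle u/x\rangle=(\lambda x.x)u$, so the (identity) condition gives realizability with $p(x)=x+3$ in (ax1) and a constant $p=3$ in (ax2); in both cases $(*)$ reduces to $1+\downarrow q\le\downarrow q+3$ and to $1\le 3$.

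For the two applications the essential device is (concatenation), which merges the packaged substitutions of the premises. In $(\multimap_e)$ I set $p=p_1(\vx_1)+p_2(\vx_2)$: unfolding $\Vdash L\multimap B$ on the two realizers gives $(t\langle\theta_1\rangle)(u\langle\theta_2\rangle)$ majorized by $(p_1(\vq_1)+p_2(\vq_2))+\vm$, and (concatenation) rewrites this to $(tu)\langle\theta_1,\theta_2\rangle$ with the same majorizer; since $o(L\multimap B)=o(B)$ and $p_2(\vq_2):o$, one has $\downarrow p(\vq)=\downarrow p_1(\vq_1)+\downarrow p_2(\vq_2)$ and $(*)$ follows by adding the two inductive bounds. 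The case $(\Rightarrow_e)$ is the delicate one: the definition of $\Vdash A\Rightarrow B$ forces the majorizer $p_1(\vq_1)(p_2(\vq_2))$ (the premise for $u$ carries no linear variables), but this does not by itself dominate $|u|+\sum_j c_j\downarrow q_{2,j}$, so $(*)$ would fail. I therefore take $p(\vx_1,\vx_2)=p_1(\vx_1)(p_2(\vx_2))+\downarrow p_2(\vx_2)$; the extra additive term is legitimate because (monotonicity) lets me inflate a majorizer by any base-type quantity, and it supplies exactly the missing part of $(*)$ while monotonicity of $\downarrow$ handles the functional part. The introductions $(\multimap_i)$ and $(\Rightarrow_i)$ are dual: $\lambda z.t$ must absorb an argument, and the majorizer is obtained from the premise's $p$ by re-abstraction (adding a variable in $(\Rightarrow_i)$) together with a constant and a copy of the lowered argument majorizer, the latter paying for the weak call-by-value cost of evaluating an argument even when $z\notin FV(t)$; (exchange) and (weakening) reorder the resulting redexes.

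The structural rules use the correspondingly named saturation conditions. For (Contr) I identify the two majorizer variables, $p'(\vx,z')=p(\vx,z',z')$, and apply (contraction): the induction hypothesis with $r_1=r_2=r$, together with the facts that $z$ then occurs $c_x+c_y$ times and $|t[z/x,z/y]|=|t|$, yields $(*)$ verbatim. For (Derel) I promote the linear majorizer to a polynomial argument via $p'(\vx,x')=p(\vx)+x'$, and for (Weak) I pad $p$ with $\downarrow q'+2$ for each weakened non-linear variable and with $2$ for each weakened linear one (the weak call-by-value weakening overhead of Lemma \ref{value_lemma}). The quantifier and fixpoint rules leave $p$ untouched: $(\forall_i)$ follows from the definition of $\Vdash\forall$ since $\alpha\notin FV(\vC;\vM)$, $(\forall_e)$ from part 3 of Lemma \ref{l-real}, and $(\mu_i),(\mu_e)$ from part 5 of Lemma \ref{l-real}, all using $o(\forall\alpha A)=o(A)$ and the invariance of $o(\cdot)$ under substitution of linear formulas.

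The main obstacle is keeping $(*)$ true throughout, which amounts to choosing, at every rule, the precise additive overhead that reconciles the majorizer forced by the realizability clauses with the syntactic quantity $|t|+\sum_i c_i\downarrow q_i$. The $(\Rightarrow_e)$ case above is the crux: the purely compositional majorizer is too small, and one must both enlarge it by $\downarrow p_2$ and justify the enlargement by (monotonicity). Getting the constants right everywhere is genuinely fiddly, since they are dictated by the weak call-by-value cost model — every discarded or duplicated argument must still be paid for, exactly as recorded in Lemmas \ref{l-cost} and \ref{value_lemma} — and these are the costs that the conditions (weakening), (contraction) and (identity) were designed to meter. Once the bookkeeping is arranged so that each constructed $p$ simultaneously majorizes the rearranged realizer and satisfies $(*)$, the induction closes.
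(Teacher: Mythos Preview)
Your induction is structured exactly as the paper's, and your majorizers for the axioms, $(\multimap_e)$, the structural rules, and the $\forall/\mu$ rules agree with theirs. Your extra care at $(\Rightarrow_e)$ is worth noting: the paper takes $p=p_1(\vx_1,p_2(\vx_2))$ and simply asserts ``as before, condition $(*)$ holds'', whereas you pad with $\downarrow p_2(\vx_2)$. Your concern is legitimate --- $(*)$ for the left premise only bounds $\downarrow p_1(\vq_1,0_{o(B)})$, and nothing obviously forces $\downarrow p_1(\vq_1,p_2(\vq_2))\ge \downarrow p_1(\vq_1,0_{o(B)})+\downarrow p_2(\vq_2)$ --- so your additive patch, justified by (monotonicity), is a clean fix.

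Where your sketch falls short is the two introduction rules, and this is where the real work lies. To establish $(\lambda z.t)\langle\theta\rangle,\,p(\vq)+\vm\Vdash L\multimap A$ one must verify the clause $TS((\lambda z.t)\langle\theta\rangle)\le\downarrow(p(\vq)+\vm)$ \emph{directly}: the packaged term $(\lambda\vx\vy z.t)\vu\vs$ reduces under weak call-by-value to the value $\lambda z.t[\Val{\vu}/\vx,\Val{\vs}/\vy]$, and the cost of that reduction --- driven by the duplication of the $\vu$'s --- is what Lemmas~\ref{l-cost} and~\ref{value_lemma} bound. The paper carries out this computation explicitly and shows that $(*)$ from the induction hypothesis (this is the one place $(*)$ is actually \emph{used}) plus a suitable constant $d$ suffices. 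Your account (``a constant and a copy of the lowered argument majorizer'', ``(exchange) and (weakening) reorder the resulting redexes'') does not describe this: the saturated-set closure conditions are of no help for the (bound) clause of $\Vdash L\multimap A$, which must be checked by an explicit $TS$ calculation. You have the skeleton and the right ingredients, but the introduction cases need the concrete cost bookkeeping rather than an appeal to saturation.
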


We call the above $p(\vx)$ a {\em majorizer} of 
$\vx:\vC; \vy:\vM \vdash t: A$. 

\begin{proof}
By induction on the length of the derivation. 
We omit the cases for $\forall$ and $\mu$, 
since they easily follow from Lemma \ref{l-real}. 
Accordingly, 
%We only explain the cases of $\Rightarrow$ and $\multimap$ and postpone
%the simpler cases to Appendix \ref{appendix_a}. 
%Since we do not deal with quantifiers and type fixpoints,
we do not specify the valuation $\eta$,
simply writing $\Vdash$ for $\Vdash_\eta$.
We distinguish the last inference rule of the derivation.

\medskip\noindent
Case (ax1): For $x : A ; \vdash x : A$, take $p(x) = x + 3$ as 
the majorizer. Condition (*) obviously holds.

If $u, q \Vdash A$, then condition (identity) for saturated sets implies
$x\langle u/x\rangle, q+3 \Vdash A$, namely 
$x\langle u/x\rangle, p(q) \Vdash A$.

\medskip\noindent
Case (ax2): For $; y : L \vdash y : L$, take $p = 3$ as 
the majorizer.
%In the sequel, we use some fixed notations: 
%$\vu,\vq,\vv,\vm,\vw,\vk$ stand for realizers and majorizers such that
%$\vu,\vq\Vdash \vC$, $\vv,\vm\Vdash \vM$ and
%$\vw,\vk\Vdash \vK$.

\medskip\noindent
Case ($\multimap_e$): \qquad
\raisebox{-1em}{
$\infer{\vx_1:\vC_1, \vx_2:\vC_2; \vy_1:\vM_1, \vy_2:\vM_2 \vdash t_1 t_2: A}{
\vx_1:\vC_1; \vy_1:\vM_1 \vdash t_1: L\multimap A
&
\vx_2:\vC_2; \vy_2:\vM_2 \vdash t_2: L}$}

\medskip\noindent
By the induction hypothesis, we have majorizers 
$p_1(\vx_1): o(L\multimap A)$ and $p_2(\vx_2): o(L)$
of the left and right premises, respectively.
We claim that $p(\vx_1,\vx_2) = p_1(\vx_1)+ p_2(\vx_2)$ 
is the suitable 
majorizer of the conclusion. Notice that $p_2(\vx_2)$ is of type $o$,
so that the addition is well defined. Condition (*) follows
by the induction hypothesis.

Suppose that $\vu_i,\vq_i\Vdash \vC_i$ and
$\vs_i,\vm_i\Vdash \vM_i$ for $i=1,2$
and write  $\theta_i$ for the list
$\vu_i/\vx_i, \vs_i/\vy_i$.

Then the induction hypothesis yields 
$t_1\langle \theta_1\rangle, p_1(\vq_1)+\vm_1 \Vdash L\multimap A$
and 
$t_2\langle\theta_2\rangle, p_2(\vq_2)+\vm_2 \Vdash L$.
Hence by the definition of realizability,
$t_1\langle \theta_1\rangle
t_2\langle\theta_2\rangle,
p_1(\vq_1)+p_2(\vq_2)+\vm_1+\vm_2 \Vdash A$, 
so by conditions (concatenation) and (exchange),
$(t_1t_2)[\vu_1/\vx_1, \vu_2/\vx_2, \vs_1/\vy_1,\vs_2/\vy_2], 
p(\vq_1,\vq_2)+\vm_1 + \vm_2 \Vdash A$
as required.

\medskip\noindent
Case ($\Rightarrow_e$): \qquad
\raisebox{-1em}{
$
\infer{\vx_1:\vC_1,\vx_2:\vC_2; \vy:\vM \vdash t_1 t_2: A}{
\vx_1 : \vC_1 ; \vy : \vM \vdash t_1: B\Rightarrow A
&
\vx_2 : \vC_2 ; \vdash t_2 : B}
$}

\medskip\noindent
By the induction hypothesis, we have majorizers 
$\lambda z.p_1(\vx_1,z): o(B)\rightarrow o(A)$ and $p_2(\vx_2): o(B)$
of the left and right premises, respectively.
We claim that $p(\vx_1,\vx_2) = p_1(\vx_1, p_2(\vx_2))$ is the suitable 
majorizer of the conclusion. As before, condition (*) holds.

Suppose that $\vu_i,\vq_i\Vdash \vC_i$ for $i=1,2$,
$\vs,\vm\Vdash \vM$, and write $\theta_1 = \vu_1/\vx_1, \vs/\vy$
and $\theta_2 = \vu_2/\vx_2$.
Then the induction hypothesis yields
$t_1\langle\theta_1\rangle, \lambda z. p_1(\vq_1,z)+\vm \Vdash B\Rightarrow A$
and 
$t_2\langle\theta_2\rangle, p_2(\vq_2) \Vdash B$. 
Hence
$t_1\langle\theta_1\rangle
t_2\langle \theta_2\rangle,$
$p_1(\vq_1,p_2(\vq_2))+\vm \Vdash A$, so
by conditions (concatenation) and (exchange),
$(t_1 t_2)\langle\vu_1/\vx_1, \vu_2/\vx_2, \vs/\vy\rangle),$
$p(\vq_1,\vq_2)+\vm \Vdash A$
as required.

\medskip\noindent
Case ($\multimap_i$): \qquad
\raisebox{-1em}{
$
\infer{\vx:\vC; \vy:\vM \vdash \lambda z.t: L\multimap A}{
\vx:\vC; \vy:\vM, z:L \vdash t: A}
$
}

\medskip\noindent
By the induction hypothesis, we have a majorizer
$p_0(\vx): o(A)$
of the premise. 
We claim that 
$p(\vx)= p_0(\vx)+ d$ with constant $d$ specified below
 is the suitable majorizer
of the conclusion. Condition (*) holds if $d\geq 1$.

Suppose that $\vu,\vq\Vdash \vC$, $\vs,\vm\Vdash \vM$ 
and write $\theta = \vu/\vx,\vs/\vy$.
Then whenever $w,k\Vdash L$, the induction hypothesis gives us
$t\langle\theta, w/z\rangle, p_0(\vq)+\vm + k\Vdash A$.
By (monotonicity),
$t\langle \theta, w/z\rangle, p(\vq)+\vm + k\Vdash A$,
namely, $w,k\Vdash L$ implies 
$(\lambda z.t)
\langle \theta\rangle w,\; p(\vq)+\vm + k\Vdash A$.

Hence it just remains to verify that 
$((\lambda z.t)
\langle \theta\rangle,\; p(\vq)+\vm)$ satisfies condition (bound).
Suppose that $\vx = x_1, \dots, x_a$
%$\vy = y_1, \dots, y_b$ and 
and
each $x_i$ occurs at most $c_i$ times in $t$.
We assume that $c_i \geq 2$ for $i=1, \dots, a$;
the case $c_i=0,1$ can be easily treated by choosing $d$ large 
enough. 
We have
$$(\lambda z.t)\langle \theta\rangle 
= 
(\lambda \vx\vy z.t)\vu\vs
\stackrel{n_1}{\Arr}
(\lambda \vx\vy z.t)\Val{\vu}\Val{\vs}
\stackrel{n_2}{\Arr}
(\lambda \vy z.t[\Val{\vu}/\vx])\Val{\vs}
\stackrel{n_3}{\Arr}
\lambda z.t[\Val{\vu}/\vx,\Val{\vs}/\vy],$$
where
$n_1 = Time(\vu)+ Time(\vs)$, 
$n_3 \leq d_1 =$ the length of the list $\vy$, and
$n_2  =  |\lambda \vy z.t[\Val{\vu}/\vx]| - | 
(\lambda \vx\vy z.t)\Val{\vu}|$ 
$  \leq  (c_1 -1)|\Val{u_1}| +
\cdots + (c_a -1)|\Val{u_a}|
\leq 
(c_1 -1)TS(u_1) +
\cdots + (c_a -1)TS(u_a)$
by Lemmas \ref{l-cost} and \ref{value_lemma}(1).
Hence
\begin{eqnarray*}
TS((\lambda z.t)\langle \theta\rangle) & = &
Time((\lambda \vx\vy z.t)\vu\vs) + |(\lambda \vx\vy z.t)\vu\vs| \\
& \leq & Time(\vu) + Time(\vs) +
(c_1 -1)TS(u_1) +
\cdots + (c_a -1)TS(u_a) + d_1 + |t| + |\vu| + |\vs| + d_2\\
& = & |t|+ c_1 TS(u_1) + \cdots + c_a TS(u_a) + TS(\vs) + d_1 + d_2,
\end{eqnarray*}
where $d_2$ is the length of $\vx\vy z$.
Because of $TS(\vu) \leq \downarrow\vq$, $TS(\vs)\leq \vm$ 
and condition (*), we obtain
$TS((\lambda z.t)\langle \theta\rangle)$ $\leq p(\vq) + \vm$
by letting $d= d_1 + d_2$. We therefore conclude 
$(\lambda z.t)
\langle \theta\rangle,\; p(\vq)+\vm\Vdash L\multimap A$.

\medskip\noindent
Case ($\Rightarrow_i$): \qquad
\raisebox{-1em}{
$
\infer{\vx:\vC; \vy:\vM \vdash \lambda z.t: B\Rightarrow A}{
\vx:\vC, z:B; \vy:\vM \vdash t: A}
$
}

\medskip\noindent
By the induction hypothesis, we have a majorizer
$p_0(\vx,z): o(A)$ of the premise.
We claim that 
$p(\vx)= \lambda z. p_0(\vx,z)+ d: o(B)\rightarrow o(A)$ 
with $d$ a large enough constant
is the suitable majorizer
of the conclusion.
The proof is just the same as above.

\medskip\noindent
Case ($Contr$): \qquad
\raisebox{-1em}{
$
\infer{ z:B, \vx:\vC; \vy:\vM \vdash t[z/z_1,z/z_2]: A}{
z_1:B, z_2: B, \vx:\vC ; \vy:\vM \vdash t: A}
$}

\medskip\noindent
By the induction hypothesis, we have a majorizer
$p_0(z_1,z_2,\vx): o(A)$
of the premise. We can prove that 
$p(z,\vx)= p_0(z,z,\vx)$
is the suitable majorizer of the conclusion by using
condition (contraction).
%Condition (*) is easily verified.
%Suppose that $w,r \Vdash B$, $\vu,\vq\Vdash \vC$,  $\vs,\vm\Vdash \vM$,
%and write $\theta = \vu/\vx,\vs/\vy$.
%Then the induction hypothesis yields
%$t\langle w/z_1, w/z_2,\theta\rangle, 
%p_0(r,r,\vq)+\vm \Vdash A$.
%By condition 5, we obtain
%$t[z/z_1, z/z_2]\langle  w/z,\theta\rangle, p_0(r,r, \vq)+ \vm \Vdash A$
%as required.\paragraph{}

\medskip\noindent
Case ($Weak$): \qquad
\raisebox{-1em}{
$
\infer{\vx:\vC; \vy:\vM, z:L \vdash t: A}{
\vx:\vC; \vy:\vM \vdash t: A}
$
}

\medskip\noindent
By the induction hypothesis, we have a majorizer
$p_0(\vx): o(A)$
of the premise. 
We can prove that 
$p_0(\vx)+2: o(A)$ works for the conclusion
by using condition (weakening).\\
%Suppose that $\vu,\vq\Vdash \vC$, $\vs,\vm\Vdash \vM$,
% $w,k\Vdash L$ and write $\theta = \vu/\vx, \vs/\vy$.
%Then 
%we have $t\langle\theta\rangle, 
%p_0(\vq)+\vm \Vdash A$.
%By condition 4, 
%$t\langle\theta, w/z\rangle, 
%p_0(\vq)+\vm + Time(w)+|w|+2 \Vdash A$.
%By condition 2 and the fact that $Time(w) + |w|\leq k$, we obtain
%$t\langle \vu/\vx, \vs/\vy, w/z\rangle, p(\vq)+ \vm + k
%\Vdash A$
%as required.\paragraph{}

\medskip\noindent
Case ($Derel$): \qquad
\raisebox{-1em}{
$
\infer{\vx:\vC, z:L; \vy:\vM \vdash t: A}{
\vx:\vC; \vy:\vM, z:L \vdash t: A}
$}

\medskip\noindent
By the induction hypothesis, we have a majorizer
$p_0(\vx): o(A)$
of the premise. Then it is easy to see that
$p(\vx,z)= p_0(\vx)+z$ works as a majorizer of the conclusion.
\end{proof}

\section{Polynomial time soundness}
\label{section_4}

In this section, we apply the adequacy theorem to prove
that every term $t$ of type $\church{W}\Rightarrow L$ (with $L$ a 
Scott data type) represents a polynomial time function. 
There is, however, a technical problem due to the use of
the weak call-by-value strategy.
Since it does not reduce under $\lambda$,
if $t$ is of the form 
$t = \lambda xy.t'$, then the evaluation of 
$t\,\church{w}$ gets stuck after the first reduction.

The problem can be settled by a little trick when 
$L$ is fixpoint-free,
eg., $L = \scott{B}_2$ (Subsection \ref{ss-predicate}).
However, the case when $L = \scott{W}$ is not so easy.
The main difficulty is that,
although each \emph{bit} of the output Scott word 
can be computed in polynomial time,
its \emph{length} is not yet ensured to be polynomial. 
The length cannot be detected by 
weak call-by-value; it rather 
depends on the size of the $\beta$-normal form.
We are thus compelled to develop another realizability argument 
based on the $\beta$-normal form,
which indeed ensures that the output is of polynomial length
(Subsection \ref{ss-size}). We will then be able to prove the polynomial time
soundness for the Scott words (Subsection \ref{ss-word}).

%prove that $\DIAL$ is correct 
%for polynomial time computation, namely if $t : W \Rightarrow W_S$ 
%represents the function $f : \{0,1\}^* \rightarrow \{0,1\}^*$  
%(for every $w$, $\Val{t\church{w}}_\beta = \scott{f(w)})$ 
%then $f$ can be computed in polynomial time. 

%Using theorem \ref{adequacy_lemma} (Adequacy), 
%we can find a suitable majorizer, that bounds the time complexity 
%needed to compute the weak normal form of the term $t\church{w}$. 
%Since we use weak call-by-value strategy in the first realizability 
%definition (see section 3), we cannot deduce polytime soundness 
%directly from it. Indeed, if the term $t$ is of the form 
%$t = \lambda n.\lambda xy.t'$ then, because we cannot reduce under 
%the $\lambda$, $t\,\church{w}$ will not reduce after the first reduction.
%We therefore need an explicit mechanism to compute the output of $f$.

\subsection{Polynomial time soundness for predicates}
\label{ss-predicate}
We first observe that 
Church numerals and words are bounded by linear majorizers.

\begin{lemma}~
\begin{enumerate}
\item For every $n\in\N$, we have
$\church{n}, p_n\Vdash \church{N}$
with $p_n = \lambda z. n(z + 3)+ 3: o\rightarrow o$. 
\item For every $w\in\{0,1\}^n$, we have 
$\church{w}, q_n \Vdash \church{W}$
with $q_n = \lambda z_0 z_1. n(z_0 + z_1 + 3)+ 3 
: o\rightarrow o\rightarrow o$.
\end{enumerate}
\end{lemma}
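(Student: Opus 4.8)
The plan is to prove both statements by direct computation, exhibiting explicit majorizers and verifying the realizability conditions by unfolding the definition of $\Vdash$ for the types $\church{N}$ and $\church{W}$. Recall that $o(\church{N}) = o(\forall\alpha(\alpha\multimap\alpha)\Rightarrow(\alpha\multimap\alpha)) = o \rightarrow o$, which matches the announced type of $p_n$; similarly $o(\church{W}) = o\rightarrow o\rightarrow o$ matches $q_n$. So the typing of the majorizers is already consistent, and what remains is to check that these polynomials genuinely bound the realizers.

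For part (1), I would argue by induction on $n$. Unfolding the realizability clause for $\church{N} = \forall\alpha(\alpha\multimap\alpha)\Rightarrow(\alpha\multimap\alpha)$, I must show that for every saturated set $X$ of type $o$, every $(f,a)$ realizing $\alpha\multimap\alpha$, and every $(x,b) \in X$, the term $\church{n}fx$ together with an appropriate majorizer value realizes $\alpha$, i.e.\ lands in $X$. Since a realizer $(f,a)$ of $\alpha\multimap\alpha$ satisfies that $(x,b)\in X$ implies $(fx, a+b)\in X$, iterating $n$ times gives $(\church{n}fx, \text{something})\in X$. The key bookkeeping is tracking the cost: each application of $f$ adds the majorizer $a$ of $f$ plus some constant overhead coming from the weak call-by-value evaluation of the nested applications (the ``$+3$'' constants). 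The shape $n(z+3)+3$ of $p_n$ is exactly designed so that substituting $z := a$ (the majorizer of $f$) and accounting for $n$ iterations plus a final constant yields a valid bound; I would verify the base case $n=0$ and the step $n \to n+1$ directly, appealing to the (identity), (concatenation), and (bound) conditions of saturated sets. I must also confirm condition $TS(\church{n})\leq \downarrow p_n$, which holds since $\church{n}$ has size linear in $n$ and $\downarrow p_n = p_n(0) = 3n+3$.

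Part (2) is entirely analogous, treating $\church{W}$ with its two function arguments $f_0, f_1$ of majorizers $z_0, z_1$. The proof proceeds by induction on the length of $w$, where at each step one peels off the leading bit $i_1$ and applies either $f_0$ or $f_1$; the majorizer $q_n = \lambda z_0 z_1.\, n(z_0+z_1+3)+3$ conservatively bounds each application by $z_0 + z_1 + 3$ regardless of which bit is consumed, so the same per-iteration bound works uniformly along the word. The cleanest route is to prove a single generalized statement by induction on $|w|$ and then specialize; since the bit-dependence is absorbed into the symmetric sum $z_0+z_1$, no case analysis on the individual bits is needed in the cost estimate.

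I expect the main obstacle to be the precise arithmetic of the constant overheads. The weak call-by-value cost model charges for building and evaluating each redex, and the (identity) and (concatenation) saturation conditions each contribute fixed additive constants; getting the coefficient of the iteration count exactly right — so that the stated closed forms $p_n$ and $q_n$ (rather than merely some linear polynomial) are provably valid majorizers — requires careful use of Lemma~\ref{value_lemma} at each step. Once the per-step increment is pinned down to $z+3$ (respectively $z_0+z_1+3$) and the terminal constant to $3$, the induction closes routinely.
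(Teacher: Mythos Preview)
Your overall strategy is right, but there is a genuine gap: you never invoke the \textbf{(contraction)} closure condition of saturated sets, and without it the argument does not close. When you write ``iterating $n$ times gives $(\church{n}fx,\text{something})\in X$'', you are conflating two distinct terms. Iterating the $\multimap$-clause for a realizer $u,m\Vdash\alpha\multimap\alpha$ starting from $v,k\Vdash\alpha$ literally yields $(u^n v,\, nm+k)\in X$ --- the $n$-fold application of the term $u$ to $v$. What you need in $X$ is $\church{n}uv=(\lambda fx.f^nx)uv$, which is $\beta$-equivalent but not syntactically equal, and saturated sets are \emph{not} closed under $\beta$-equivalence. The conditions (identity), (concatenation), (bound) that you cite do not suffice to bridge this; (contraction) is exactly the missing ingredient that handles the non-linear occurrence of $f$ in $\church{n}$ --- which is, after all, the whole point of the $\Rightarrow$ as opposed to $\multimap$.

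The paper does not argue by induction on $n$. It proceeds directly for arbitrary $n\geq 1$: apply (identity) to obtain $n$ separate realizers $f_i\langle u/f_i\rangle,\,m+3\Vdash\alpha\multimap\alpha$ with \emph{pairwise distinct} fresh variables $f_1,\dots,f_n$, and $x\langle v/x\rangle,\,k+3\Vdash\alpha$; iterate the $\multimap$-clause to place $f_1\langle u/f_1\rangle(\cdots f_n\langle u/f_n\rangle(x\langle v/x\rangle)\cdots)$ in $X$ with bound $n(m+3)+k+3$; then use (concatenation) to rewrite this as $(f_1(\cdots (f_n x)\cdots))\langle u/f_1,\dots,u/f_n,v/x\rangle$; and finally apply (contraction) $n-1$ times to merge $f_1,\dots,f_n$ into a single $f$, obtaining precisely $(f^nx)\langle u/f,v/x\rangle=\church{n}uv$. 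So the ``$+3$'' per iteration is the cost of (identity), not of call-by-value evaluation as you suggest. Your induction on $n$ can be made to work along the same lines, but each inductive step still needs one use of (contraction) to absorb the fresh copy of $u$ into the existing binder; you should make that explicit.
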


\begin{proof}
Since both are similar, we only prove the statement 1.
We assume $n\geq 1$,
since the case $n=0$ is easy.
Let $\eta$ be a valuation, 
$u, m\Vdash_\eta \alpha\multimap \alpha$ and
$v, k\Vdash_\eta \alpha$.

By condition (identity), we have
$x\langle v/x\rangle, k+3\Vdash_\eta \alpha$ and
$f_i\langle u/f_i\rangle, m+3\Vdash_\eta \alpha\multimap \alpha$ 
for any variable $f_i$.
So 
$f_1 \langle u/f_1 \rangle(\cdots (f_n\langle u/f_n\rangle
 x\langle v/x))\cdots), n(m+3)+ k+3 \Vdash_\eta \alpha$.
By (concatenation) and (contraction),
$f^n x \langle u/f,v/x\rangle, n(m+3)+ k+3 
\Vdash_\eta \alpha$.
By noting that $f^n x \langle u/f,v/x\rangle = (\lambda fx.f^n x)uv$, 
we obtain $\lambda fx.f^n x,$ $\lambda z. n(z+3)+3
\Vdash_\eta (\alpha\multimap\alpha)\Rightarrow
(\alpha\multimap\alpha)$.
%2. Suppose that $w = i_1 \cdots i_n$.
%By a similar reasoning, we obtain 
%$\church{w}, 
% \lambda z_0 z_1. n_0 (z_0 + 3) + n_1 (z_1 + 3) + 3
%\Vdash W,$
%where $n_0, n_1$ are the numbers of 0's and 1's in 
%$i_1 \cdots i_n$, respectively.
%The majorizer is obviously bounded by $q_n$.
\end{proof}
	  	  
These linear majorizers are
turned into polynomial ones 
when applied to majorizers of higher order type.
As a consequence, we obtain a polynomial bound on the execution time.

\begin{theorem}[Weak soundness]
\label{weak_sound}
Let $L$ be a linear formula.
If $\vdash t : \church{W} \Rightarrow L$, then 
there exists a polynomial $P$ such that for every $w\in\{0,1\}^*$,
$Time(t\church{w}) \leq P(|w|)$.
\end{theorem}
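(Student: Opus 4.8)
The plan is to combine the Adequacy theorem (Theorem~\ref{adequacy_lemma}) with the preceding lemma bounding Church words by linear majorizers, and then exploit the lowering operator~$\downarrow$ to extract a concrete polynomial bound on $Time$. Suppose $\vdash t:\church{W}\Rightarrow L$. By adequacy, there is a majorizer $p(\,): o(\church{W}\Rightarrow L) = o(\church{W})\rightarrow o(L)$, and since $t$ is closed, adequacy gives $t,\,p\Vdash_\eta \church{W}\Rightarrow L$ for any valuation $\eta$; note $o(\church{W}) = o\rightarrow o\rightarrow o$ and $o(L)=o$, so $p$ is a higher order polynomial of type $(o\rightarrow o\rightarrow o)\rightarrow o$.

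First I would feed in the realizer for a Church word: by the lemma, for $w\in\{0,1\}^n$ we have $\church{w},\,q_n\Vdash_\eta\church{W}$ with $q_n=\lambda z_0z_1.\,n(z_0+z_1+3)+3$. The definition of realizability for $\Rightarrow$ then yields $t\church{w},\,p(q_n)\Vdash_\eta L$. Since $L$ is linear, $o(L)=o$, so $p(q_n)$ is a closed higher order polynomial of type~$o$, i.e.\ (after identification) a natural number, and by the (bound) clause of the realizability relation we get $TS(t\church{w})\leq\downarrow p(q_n)=p(q_n)$. Because $Time(t\church{w})\leq TS(t\church{w})$ by definition of $TS$, it suffices to bound $p(q_n)$ by a polynomial in $n=|w|$.

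The key remaining step is to argue that $n\mapsto p(q_n)$ is bounded by a polynomial in~$n$. Here I would use that $p$ is a \emph{fixed} closed higher order polynomial (built only from constants and $+$, hence monotone), while the whole dependence on $n$ enters through $q_n$, which is linear in $n$: substituting the argument $q_n$ into $p$ produces a base-type term in which $n$ appears additively and with a bounded multiplicative nesting determined by the (fixed) structure of~$p$. Concretely, I would evaluate $p$ on the argument $q_n=\lambda z_0z_1.\,n(z_0+z_1+3)+3$ by $\beta$-reduction in the calculus of higher order polynomials; since higher order polynomials have no genuine multiplication, every application of $q_n$ inside~$p$ contributes a factor linear in~$n$, and the number of such applications is fixed by~$p$. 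Thus $\downarrow p(q_n)$ is a polynomial $P(n)$ whose degree equals the maximal number of nested uses of the argument in~$p$. I would make this precise by an induction on the structure of $p$ showing that for any fixed higher order polynomial $r$ of type $(o\rightarrow o\rightarrow o)\rightarrow o$, the value $\downarrow r(q_n)$ is polynomially bounded in~$n$.

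The main obstacle is exactly this last quantitative step: turning the purely additive, higher-order shape of $p$ into an honest polynomial in~$n$, and in particular pinning down that repeated nested application of the base-type-valued argument $q_n$ yields only polynomial (rather than exponential) growth. This is where the restriction of $\DIAL$ to linear second-order instantiation is doing its work — it is precisely what prevents $p$ from iterating its functional argument in a way that would compose the linear factors multiplicatively without bound (as genuine Church-style iteration over a non-linear type would). I expect the cleanest route is to prove by induction on higher order polynomial terms that substituting $q_n$ for a free variable of type $o\rightarrow o\rightarrow o$ preserves the property of being polynomially bounded in $n$ at every type, reading a type $\tau=\tau_1\rightarrow\cdots\rightarrow\tau_k\rightarrow o$ as ``the result is polynomially bounded in $n$ whenever all base-type arguments are,'' and then specializing to the closed instance $\downarrow p(q_n)$ to define~$P$.
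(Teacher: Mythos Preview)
Your proposal is correct and follows the same route as the paper: adequacy gives a majorizer $p:(o\to o\to o)\to o$, the lemma on Church words yields $t\church{w},\,p(q_n)\Vdash L$, and (bound) gives $Time(t\church{w})\le p(q_n)$. For the last step the paper does something simpler than your logical-relations sketch: it takes $p(x):o$ in $\beta$-normal form and observes that, since $x:o\to o\to o$ is the only free variable, $p(x)$ is either a constant, a sum $p_1(x)+p_2(x)$, or $x\,p_1(x)\,p_2(x)$, and in the latter case $p(q_n)=n(p_1(q_n)+p_2(q_n)+3)+3$, whence a direct structural induction gives a polynomial in $n$.
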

\begin{proof}
By the adequacy theorem, 
we have a majorizer $\lambda x.p(x) : o(\scott{W}\Rightarrow L) 
= (o \rightarrow o \rightarrow o) \rightarrow o$ 
such that $t,\, \lambda x.p(x) \Vdash \church{W} \Rightarrow L$. 
Let $w\in\{0,1\}^n$.
By the lemma above, we have $\church{w},q_n \Vdash_\eta W$. 
Hence
by the definition of realizability, 
$t\church{w},\, p(q_n) \Vdash L$. 
	
We prove that $p(q_n):o$ is a polynomial in $n$ by induction 
on the structure of the term $p(x)$. 
%($x$ is supposed to have type $o\rightarrow o \rightarrow o$ in $p(x)$). 
We suppose that $p(x)$ is in $\beta$ normal form. %Then:

If $p(x) = k$, then $p(q_n)=k$ is a constant and obviously a polynomial in $n$.
If $p(x) = p_1(x) + p_2(x)$, then
by the induction hypothesis $p_1(q_n)$ and $p_2(q_n)$ are
polynomials in $n$, so is $p(q_n)$. Otherwise, $p(x)$ must be 
of the form $x p_1(x)p_2(x)$ since $x$ is the only free variable and
of type $o\rightarrow o\rightarrow o$.
By the induction hypothesis, $p_1(q_n)$ and $p_2(q_n)$ are polynomials in $n$. 
So $p(q_n) = q_n(p_1(q_n),p_2(q_n)) = n(p_1(q_n) + p_2(q_n)+3) + 3$ is still a polynomial in $n$.

By condition (bound), we conclude that 
$Time(t\church{w})$ is bounded by $p(q_n)$, a polynomial in $n$.
% We have $$Time(tw) \leq p(q)$$ with $p(q_n)$ a polynomial in $n$, which is what was required.
\end{proof}

This in particular implies that every term of type 
$\church{W}\Rightarrow \scott{B}_2$ 
represents a polynomial time predicate.
%In particular, we have the weak soundness property for the terms of type 
%$W\Rightarrow B$. Moreover, 
%That is, the $\beta$-normal form of $u$ and the weak normal form of $u(\op{b_0})(\op{b_1})$ coincide. We can deduce from this the following corollary:

\begin{corollary}[\class{P}-soundness for predicates]
\label{c-predicate}
If $t: \church{W} \Rightarrow \scott{B}_2$, then the predicate
 $f_t : \{0,1\}^* \rightarrow \{0,1\}$ 
defined by $f_t(w) = 1$ $\Leftrightarrow$ 
$\Val{t\church{w}}_\beta =\scott{b_1}$
is a polynomial time predicate.
\end{corollary}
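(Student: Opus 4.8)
The plan is to combine the realizability time bound coming from adequacy (Theorem \ref{adequacy_lemma}) with the invariance theorem (Theorem \ref{t-turing}); the only real work is extracting the $\beta$-normal form of the output by weak call-by-value evaluation. First I would record that $\scott{B}_2$ is a linear, fixpoint-free formula whose only closed $\beta$-normal inhabitants are $\scott{b}_0 = \lambda x_0 x_1. x_0$ and $\scott{b}_1 = \lambda x_0 x_1. x_1$ (by the normal-form characterisation, whose argument for $\scott{B}_2$ is a simpler version of the one given above for Scott numerals). Hence for each $w$ the term $t\church{w}:\scott{B}_2$ satisfies $\Val{t\church{w}}_\beta = \scott{b}_i$ for a unique $i\in\{0,1\}$, and $f_t(w)=i$. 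The obstacle is exactly the one flagged at the start of this section: weak call-by-value does not reduce under $\lambda$, so $\Val{t\church{w}}$ is an abstraction $\lambda y_0 y_1. s$ whose body still contains redexes, and reading off $i$ would seem to demand a $\beta$-normal form that $M_{eval}$ does not compute.

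The trick I would use is to force the selection above the binders by applying $t\church{w}$ to two \emph{fresh distinct variables} $a,b$, exploiting that variables are themselves values. Set $T = (t\church{w})\,a\,b$. Since $t\church{w}\to^*_\beta \scott{b}_i$, we get $T\to^*_\beta a$ when $i=0$ and $T\to^*_\beta b$ when $i=1$, so the $\beta$-normal form of $T$ is a single variable. The key observation is then that a weak call-by-value value whose $\beta$-normal form is a variable must \emph{be} that variable: a value is either a variable or an abstraction, and an abstraction has an abstraction as normal form. Consequently $\Val{T}\in\{a,b\}$ and can be read off directly, with $\Val{T}=a \Leftrightarrow f_t(w)=0$ and $\Val{T}=b\Leftrightarrow f_t(w)=1$.

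It remains to bound $TS(T)$ polynomially in $|w|$, which is where I expect the main subtlety to lie, since $T$ is open and Theorem \ref{weak_sound} applies only to closed terms. The idea is to apply adequacy directly to the open judgment $c:\church{W}; a:\alpha, b:\alpha \vdash (t c)\,a\,b : \alpha$, obtained from $t:\church{W}\Rightarrow\scott{B}_2$ by instantiating the quantifier of $\scott{B}_2$ with the propositional \emph{variable} $\alpha$ (kept abstract) and applying $\multimap_e$ twice. Theorem \ref{adequacy_lemma} yields a majorizer $p(z):o$ with $z:o(\church{W})=o\to o\to o$. Choosing the valuation $\eta$ with $\eta(\alpha)=X_0$ (the greatest saturated set of type $o$), the variables $a,b$ realize $\alpha$ trivially, since $(a,1),(b,1)\in X_0$ because a variable is a value with $TS=1$. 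Feeding in moreover the Church-word realizer $\church{w}, q_n \Vdash \church{W}$ with $n=|w|$ and $q_n=\lambda z_0 z_1.\,n(z_0+z_1+3)+3$, adequacy gives $(tc)ab\langle \church{w}/c, a/a, b/b\rangle,\ p(q_n)+2 \Vdash_\eta \alpha$, i.e.\ this pair lies in $X_0$, whence $TS\big((tc)ab\langle\church{w}/c,a/a,b/b\rangle\big)\le p(q_n)+2$.

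Finally, exactly as in the proof of Theorem \ref{weak_sound}, $p(q_n)$ is a polynomial in $n=|w|$, by induction on the structure of $p$. Since this administrative term reduces to $T$ with the same value, $M_{eval}$ (Theorem \ref{t-turing}) computes $\Val{T}$ in time $O((p(q_n)+2)^4)$, a polynomial in $|w|$; reading off whether $\Val{T}$ is $a$ or $b$ then decides $f_t(w)$ in polynomial time. The decisive ingredients are thus the variable trick together with the valuation $\eta(\alpha)=X_0$: the former makes the output observable without ever forming a $\beta$-normal form, while the latter lets the realizability time bound survive the two extra applications.
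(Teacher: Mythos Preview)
Your argument is correct, but the paper takes a shorter route that stays entirely within closed terms. Instead of applying $t\church{w}$ to fresh variables $a,b$ and invoking adequacy on the open judgment $c:\church{W};\,a:\alpha,b:\alpha\vdash (tc)ab:\alpha$ with $\eta(\alpha)=X_0$, the paper simply feeds the closed values $\scott{b}_0,\scott{b}_1$ as arguments: the term $\lambda x.\,tx\,\scott{b}_0\,\scott{b}_1$ again has type $\church{W}\Rightarrow\scott{B}_2$ (instantiating $\alpha$ with $\scott{B}_2$), so Theorem \ref{weak_sound} applies directly and yields the polynomial time bound with no further work. That $(\lambda x.\,tx\,\scott{b}_0\,\scott{b}_1)\church{w}$ actually evaluates under weak call-by-value to one of $\scott{b}_0,\scott{b}_1$ is argued by the same variable trick you use (see the proof of Lemma \ref{l-trick}), but only for correctness of the output, not for the time bound.

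What each approach buys: the paper's version is a two-line corollary of the already-packaged Theorem \ref{weak_sound}, avoiding any reopening of the adequacy machinery. Your version is more self-contained and makes explicit why open terms and the canonical saturated set $X_0$ suffice, which is conceptually illuminating; it also shows that the ``observation'' step and the time bound can be obtained in a single realizability move, whereas the paper separates the two (closed-term time bound via Theorem \ref{weak_sound}, output identification via the variable substitution argument of Lemma \ref{l-trick}).
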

\begin{proof}
Observe that 
$\lambda x. tx\scott{b}_0\scott{b}_1: \church{W} 
\Rightarrow \scott{B}_2$ and for every $w\in\{0,1\}^*$
the term $(\lambda x. tx\scott{b}_0\scott{b}_1)\church{w}$
reduces to either $\scott{b}_0$ or $\scott{b}_1$ 
by the weak call-by-value strategy
(see the proof of Lemma \ref{l-trick}). 
By the previous theorem, the runtime is 
bounded by a polynomial.
\end{proof}

\subsection{Size realizability}
\label{ss-size}
As explained in the beginning of this section, the previous realizability
semantics does not tell anything about the length of the output 
Scott words.
We thus introduce another realizability semantics based on 
the (applicative) size of $\beta$-normal forms.
Due to lack of space, we can only state the definitions and the result.

Let $\sharp t$ be the number of applications in $t$,
which is more precisely defined by:
$$
\sharp x = 0, \qquad
\sharp (tu)  =  \sharp t + \sharp u + 1, \qquad
\sharp \lambda x.t  =  \sharp t.
$$
$\sharp t$ is not relevant for bounding the size of $t$ in general
(think of $t= \lambda x_1 \cdots x_{100} . x_i$; we have 
$\sharp t = 0$). However, when $t$ is a Scott word,
$\sharp t$ exactly corresponds to the length of 
the word represented by $t$.

\begin{definition}[Size-saturated sets]
Let $\tau$ be a type for higher order polynomials.
A nonempty set $X \subseteq \Lambda \times \Pi$ is 
\emph a {size-saturated set of type} $\tau$ if
whenever $(t,p) \in X$, $t$ is normalizable, 
$p$ is a closed higher order polynomial 
of type $\tau$ and
the following hold:
\begin{description}
\item[(bound')] $\sharp \Val{t}_\beta \leq \downarrow p$.
\item[(weak')] if $t= t_0\langle \theta\rangle\vu$ and $z\not\in FV(t_0)$,
then $(t_0\langle \theta,w/z\rangle\vu,\; p)\in X$. 
\item[(identity')] if $t= t_0 \vu$, then 
$((x\langle t_0/x\rangle)\vu,\; p) \in X$.
\end{description}
We also require conditions (monotonicity), (exchange), (contraction),
(concatenation) of Definition \ref{real_def}, and finally,
\begin{description}
\item[(variable)] $(\maltese, 0_\tau)\in X$,
where $\maltese$ is a fixed variable.
\end{description}
\end{definition}

%Recall that $t^*$ stands for the $\beta$-normal form of $t$.
Condition (variable) employs a fixed variable
$\maltese$ (considered as an inert object), 
that helps us to deal with open terms. 
Notice that it
contradicts the previous condition (bound); that is one reason
why we have to 
consider size realizability separately from the previous
one.

As before, we have the greatest size-saturated set of type $o$:
$X_s = \{ (t, n) : \sharp \Val{t}_\beta \leq  n\}$.
A \emph{valuation} $\eta$ is now supposed to map
each propositional variable to a size-saturated set of type $o$.

\begin{definition}[Size realizability]
We define the relation $t,p \Vdash^s_\eta A$ 
as in Definition \ref{real_def}, except that
 \begin{itemize}
	\item[$\bullet$] $t,\,p\Vdash^s_\eta L \multimap A$
iff either $t=\maltese$ (and $p$ is arbitrary), or 
$u, m \Vdash^s_\eta L$ implies $tu, p+m \Vdash^s_\eta A$ for every $u, m$.
	\item[$\bullet$] $t,\, p \Vdash^s_\eta B \Rightarrow A$
iff either $t=\maltese$, or 
$u, q \Vdash^s_\eta B$ implies $tu, p(q) \Vdash^s_\eta A$
for every $u, q$.
\end{itemize}
\end{definition}

One can then verify that 
for every formula $A$ and valuation $\eta$, the set 
$\hat{A}_\eta = \{ (t,p) : t,p\Vdash_\eta A\}$ is a 
size-saturated set of type $o(A)$.

\begin{theorem}[Size adequacy]
Suppose that $\vx:\vC; \vy:\vM \vdash t: A$ is derivable.
Then there exists a higher order polynomial $p(\vx): o(A)$ such that 
for any valuation $\eta$ we have the following:
$$\vu, \vq\Vdash^s_\eta \vC,\ \vv, \vm\Vdash^s_\eta \vM
\Longrightarrow 
t\langle\vu/\vx, \vv/\vy\rangle, p(\vq)+\vm \Vdash^s_\eta A.$$
\end{theorem}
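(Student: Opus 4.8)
The plan is to run the very same induction on the length of the derivation as in the proof of Theorem~\ref{adequacy_lemma}, re-using for each rule essentially the same majorizer assignment, but redoing the bookkeeping against \emph{size}-saturated sets rather than the saturated sets of Definition~\ref{real_def}. As there, the cases for $\forall$ and $\mu$ are dispatched by the size-realizability analogue of Lemma~\ref{l-real}, so I omit them and abbreviate $\Vdash^s$ for $\Vdash^s_\eta$. Throughout I assume the fact stated just before the theorem, that each $\hat{A}_\eta$ is a size-saturated set of type $o(A)$; its (bound') clause is what ultimately discharges the size bound whenever a linear (base-type) formula is reached. Note also that, in contrast to Theorem~\ref{adequacy_lemma}, there is no side condition $(*)$ to maintain, so the majorizers can be read off directly from the induction hypotheses.

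For the leaves, in case (ax1), $x:A;\vdash x:A$, I take $p(x)=x$: if $u,q\Vdash^s A$ then condition (identity') gives $x\langle u/x\rangle,\,q\Vdash^s A$, that is $x\langle u/x\rangle,\,p(q)\Vdash^s A$. Since (identity') carries no additive constant, no $+3$ is needed, unlike the original proof; similarly (ax2) takes $p=0$. For the elimination rules I reuse the same combinations: $p=p_1+p_2$ for $(\multimap_e)$ and $p(\vx_1,\vx_2)=p_1(\vx_1,p_2(\vx_2))$ for $(\Rightarrow_e)$. The realizers $t_1\langle\theta_1\rangle$ are never the inert variable $\maltese$ (which does not occur in proof terms), so the implication clauses apply in their second disjunct, and the resulting application is reassembled into $(t_1t_2)\langle\cdots\rangle$ using conditions (concatenation) and (exchange) exactly as before; the majorizer identities are facts about the polynomials, hence unchanged.

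The introduction rules are where the size setting departs from Theorem~\ref{adequacy_lemma}, pleasantly in our favour. For $(\multimap_i)$, deriving $\lambda z.t:L\multimap A$ from $t:A$, I keep the induction-hypothesis majorizer $p_0(\vx)$ with no additive slack. Using the currying identity $t\langle\theta,w/z\rangle=(\lambda z.t)\langle\theta\rangle\,w$ (the same silent identification used in the original proof), the induction hypothesis yields $(\lambda z.t)\langle\theta\rangle\,w,\ p_0(\vq)+\vm+k\Vdash^s A$ whenever $w,k\Vdash^s L$; as $(\lambda z.t)\langle\theta\rangle$ is an application, this is precisely the second disjunct of the $L\multimap A$ clause. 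Crucially, the size clause for $L\multimap A$ imposes \emph{no} explicit bound on the realizer, so the delicate $TS$-cost computation via Lemmas~\ref{l-cost} and \ref{value_lemma} that dominated the $(\multimap_i)$ case of Theorem~\ref{adequacy_lemma} simply evaporates. The case $(\Rightarrow_i)$ is identical with $p(\vx)=\lambda z.p_0(\vx,z)$ (again no $+d$). Among the structural rules, (Contr) uses (contraction) with $p(z,\vx)=p_0(z,z,\vx)$, (Derel) uses $p(\vx,z)=p_0(\vx)+z$, and (Weak) now appeals to (weak'), which, in contrast to the original (weakening), adds no cost: the premise majorizer $p_0$ is reused verbatim, with (monotonicity) absorbing the extra $k$.

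The step I expect to be the genuine obstacle is keeping (bound') honest in the presence of $\maltese$, which may occur inside the input realizers $\vu,\vv$. The size bound is only ever checked at linear formulas, where membership in the size-saturated set $\eta(\alpha)$ (or in a prefixpoint, for $\mu\alpha L$) must be exhibited, and there it is inherited from the input realizers through the structural conditions (concatenation), (contraction), (weak'), (identity'). I must therefore verify that every term rearrangement invoked in the induction is one of these sanctioned conditions, so that $\sharp$ of the $\beta$-normal form never exceeds the claimed majorizer even after $\maltese$ is substituted for free variables. The subtle point making this work is that $\maltese$ contributes no applications ($\sharp\maltese=0$) yet blocks the reductions it heads, so substituting it can only decrease, never increase, the applicative size of a normal form; checking this monotonicity uniformly across the base cases is the crux, after which everything else transfers mechanically from Theorem~\ref{adequacy_lemma}.
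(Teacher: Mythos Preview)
Your proposal is correct and follows exactly the approach the paper intends: the paper's own proof (relegated to an appendix) consists of the single sentence ``the proof is just as before and even simpler; we just have to make sure that our ad hoc condition [(variable)] does no harm,'' and you have faithfully unpacked both halves of that remark---dropping the constants and the $TS$-computation at $(\multimap_i)$/$(\Rightarrow_i)$ because the arrow clauses of $\Vdash^s$ carry no explicit bound, and checking that $\maltese$ never appears as $t_1\langle\theta_1\rangle$ so the second disjunct always applies. One small clarification: your last paragraph slightly misplaces where the $\maltese$ issue lives---(bound') is guaranteed once and for all by the lemma that each $\hat{A}_\eta$ is size-saturated, so you never re-verify it in the induction; the only genuine $\maltese$-check in the adequacy proof is precisely the one you already handled in the elimination cases.
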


\begin{theorem}[Size soundness]
\label{size_sound}
	If $\vdash t : \church{W} \Rightarrow L$, then 
there exists a polynomial $P$ such that
	$\sharp \Val{t\church{w}}_\beta \leq P(|w|)$
for every $w\in\{0,1\}^*$.
\end{theorem}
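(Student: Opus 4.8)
The plan is to replay the proof of weak soundness (Theorem \ref{weak_sound}) inside the size realizability semantics, using the size adequacy theorem in place of ordinary adequacy. First I would apply size adequacy to the derivation $\vdash t:\church{W}\Rightarrow L$, obtaining a higher order polynomial $\lambda x.p(x)$ of type $o(\church{W}\Rightarrow L)=(o\rightarrow o\rightarrow o)\rightarrow o$ with $t,\,\lambda x.p(x)\Vdash^s\church{W}\Rightarrow L$. Since $L$ is linear we have $o(L)=o$, so the resulting majorizer for the output is of base type and $\downarrow$ acts on it as the identity; this is what will let condition (bound') turn a polynomial majorizer directly into a polynomial bound on $\sharp\Val{t\church{w}}_\beta$.

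Next I would establish the size analogue of the lemma bounding Church words by linear majorizers: for each $w\in\{0,1\}^n$ a majorizer $q_n=\lambda z_0 z_1.\,n(z_0+z_1+1)+c$ of type $o\rightarrow o\rightarrow o$, affine in $n$, with $\church{w},\,q_n\Vdash^s\church{W}$. The core computation is that, given size realizers $u_0,m_0\Vdash^s\alpha\multimap\alpha$, $u_1,m_1\Vdash^s\alpha\multimap\alpha$ and $v,k\Vdash^s\alpha$, the $\multimap$-clause (namely $u,m\Vdash^s\alpha\multimap\alpha$ and $v',k'\Vdash^s\alpha$ imply $uv',\,m+k'\Vdash^s\alpha$) can be iterated along the reduct $\church{w}u_0u_1v\rightarrow^* u_{i_1}(\cdots(u_{i_n}v)\cdots)$, placing this term with majorizer $\sum_j m_{i_j}+k\leq n(m_0+m_1)+k$ into the size-saturated set assigned to $\alpha$; then (bound') yields $\sharp\Val{\church{w}u_0u_1v}_\beta\leq n(m_0+m_1)+k\leq q_n(m_0,m_1)+k$. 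The additive "$+1$" inside $q_n$ is there to satisfy (bound') for $\church{w}$ itself, whose applicative size is $\sharp\church{w}=n$ while $\downarrow q_n=q_n(0,0)=n+c$. The remaining clauses are verified exactly as in the weak lemma, using (concatenation), (contraction) and (identity'); the $\maltese$ alternatives in the $\multimap$ and $\Rightarrow$ clauses are irrelevant since $\church{w}$ is an abstraction, so I check the universally-quantified alternative directly, uniformly in the size-saturated set assigned to $\alpha$.

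With these two ingredients in hand I would instantiate the realizer of $\church{W}\Rightarrow L$ at $\church{w},q_n$ to obtain $t\church{w},\,p(q_n)\Vdash^s L$, and then prove that $p(q_n):o$ is a polynomial in $n$ by induction on the structure of $p(x)$ in $\beta$-normal form, verbatim as in Theorem \ref{weak_sound}: the cases $p(x)=k$ and $p(x)=p_1(x)+p_2(x)$ are immediate, while in the only remaining case $p(x)=x\,p_1(x)\,p_2(x)$ one computes $p(q_n)=q_n(p_1(q_n),p_2(q_n))=n(p_1(q_n)+p_2(q_n)+1)+c$, still polynomial by the induction hypothesis. Condition (bound') then yields $\sharp\Val{t\church{w}}_\beta\leq\downarrow p(q_n)=p(q_n)$, which is the desired polynomial $P(|w|)$.

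I expect the genuinely new step to be the size lemma for Church words rather than the final induction, which is a pure transcription of the weak case. The delicate point is that $\sharp$ counts applications in the $\beta$-normal form, a quantity that is not controlled by weak call-by-value time and can be large for terms whose weak evaluation is cheap; I must therefore argue that one iteration of a realizer of $\alpha\multimap\alpha$ enlarges the applicative size only additively, so that $n$-fold iteration stays linear in $n$ and the majorizer $q_n$ remains affine. Getting the additive constants right so that both the iteration bound and (bound') for $\church{w}$ hold simultaneously is the one place calling for care; everything else follows the template already set up for the ordinary realizability semantics.
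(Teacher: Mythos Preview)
Your proposal is correct and follows essentially the same route as the paper, which (in its appendix) simply remarks that the polynomial produced by the weak soundness argument already works as a size bound once adequacy is replayed in the size semantics. Your only deviation is cosmetic: because (identity') carries no additive cost in the size setting, you tighten the Church-word majorizer to $q_n=\lambda z_0z_1.\,n(z_0+z_1+1)+c$ rather than reusing the weak $q_n$ with its $+3$'s; either choice works by (monotonicity), and the structural induction on $p(x)$ is identical.
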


%% We then need to explicit a turing machine that evaluates the final term, and is proved to be polytime thanks to Adequacy Lemma. We construct a Turing machine $M$ that takes a term $t_0 : N_S$ in Weak-CBV normal form and computes an integer corresponding to the $\beta$-normal form of $t_0$, which is a Scott numeral $t^* = \lambda xy.x(\lambda xy.x(...(\lambda xy.y)...))$. We then show that this machine runs in time $O(P(|t^*|, m_0))$ where $P$ is some polynomial and $|t^*|$ the size of the $\beta$-normal form of $t_0$ and $m_0$ is a majorizer of $t_0$. We then only need to bound the size of the normal form to compute in polynomial time the $\beta$-norml form of $t_0$ and of $m_0$.

\subsection{Polynomial time soundness for words}
\label{ss-word}
As in the proof of Corollary \ref{c-predicate}, 
we use a little trick. First note that 
%Consider the data type $\scott{B}_3$ representing the set $\{0,1,2\}$.
%Let us denote the three normal inhabitants of 
%$B_3=\forall \alpha. 
%\alpha\multimap\alpha\multimap\alpha\multimap\alpha$
%by
%$$
%\mathbf{b}_0 = \lambda x_0 x_1 x_\epsilon.x_0,
%\qquad
%\mathbf{b}_1 = \lambda x_0 x_1 x_\epsilon.x_1,
%\qquad
%\mathbf{b}_\epsilon = \lambda x_0 x_1 x_\epsilon.x_\epsilon.
%$$
we have a predecessor 
$\scott{p} = \lambda z.z(\lambda x.x)(\lambda x.x)(\scott{\epsilon}): 
\scott{W} \multimap \scott{W}$.
By employing it, we define 
$$
\scott{q}  =  \lambda x. x (\lambda y.\scott{b}_0)
(\lambda y. \scott{b}_1) \scott{b}_2 : 
\scott{W} \multimap \scott{B}_3, \qquad
\scott{bit}_i  =  \lambda x.\scott{q}(
\underbrace{\scott{p}\cdots\scott{p}}_{i\ times}(x)) :\scott{W} 
\multimap \scott{B}_3.
$$
\begin{lemma}\label{l-trick}
Suppose that $t$ is a closed term of type $\scott{W}$ and 
$\Val{t}_\beta$ represents a word $w\in \{0,1\}^n$ of length $n$. 
Then for any $i < n$,
$\Val{\scott{bit}_i(t)} = \scott{b}_0$ or $\scott{b}_1$, depending on
the $i$th bit of $w$. If $i\geq n$,
$\Val{\scott{bit}_i(t)} = \scott{b}_2$.
\end{lemma}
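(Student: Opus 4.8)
The plan is to prove the claim by unfolding the definitions of $\scott{p}$ and $\scott{q}$ and tracking how the weak call-by-value reduction acts on a Scott word. First I would establish the behaviour of the predecessor $\scott{p}$ on Scott words: since $\Val{t}_\beta$ represents $w \in \{0,1\}^n$, applying $\scott{p}$ peels off the leading bit, so that $\Val{\scott{p}(t)}$ represents the word $w$ with its first symbol deleted (and $\Val{\scott{p}(t)}$ represents $\scott{\epsilon}$ when $w = \epsilon$). The key point is that this must be read off the \emph{value} $\Val{\cdot}$ under weak call-by-value, not merely the $\beta$-normal form; I would argue that $\scott{p}(t)$ does evaluate to a value of the expected shape, because $t$ reduces to a value of the form $\lambda xyz.(\cdots)$ and then the redex $\scott{p}(t)$ fires to expose the subword.

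Next I would handle the single-step reader $\scott{q}$. Here the main concern is to verify that $\scott{q}$, applied to a term whose value represents a word $w'$, correctly evaluates to $\scott{b}_0$, $\scott{b}_1$, or $\scott{b}_2$ according to the first symbol of $w'$ (with $\scott{b}_2$ for the empty word). This follows by inspecting the three cases of the Scott encoding: $\scott{(0w)}$, $\scott{(1w)}$, and $\scott{\epsilon}$ each select the appropriate one of the three arguments $\lambda y.\scott{b}_0$, $\lambda y.\scott{b}_1$, $\scott{b}_2$, after which the selected argument reduces to the corresponding $\scott{b}_j$ under weak call-by-value. I would check that these reductions all go through to a value without getting stuck under a $\lambda$.

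The theorem then follows by composition. For $\scott{bit}_i(t) = \scott{q}(\scott{p}^i(t))$, I would iterate the predecessor claim $i$ times to conclude that $\scott{p}^i(t)$ evaluates to a term representing the suffix of $w$ obtained by deleting its first $i$ symbols. When $i < n$ this suffix is nonempty and its leading bit is exactly the $i$th bit of $w$, so $\scott{q}$ returns $\scott{b}_0$ or $\scott{b}_1$ accordingly; when $i \geq n$ the suffix is $\scott{\epsilon}$, so $\scott{q}$ returns $\scott{b}_2$. The proof is most cleanly organized as an induction on $i$, with the base case $i=0$ being the direct analysis of $\scott{q}(t)$ and the inductive step invoking the predecessor lemma once.

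I expect the main obstacle to be the bookkeeping around weak call-by-value evaluation rather than any conceptual difficulty: one must be careful that every intermediate term such as $\scott{p}^i(t)$ actually reduces to a genuine value (a closed abstraction of the correct arity) so that the next application can fire, since weak call-by-value does not reduce under $\lambda$ and an argument that is merely $\beta$-normal but not a value would stall the computation. Making the predecessor step precise — that $\Val{\scott{p}(t)}$ is again a Scott word value and not some stuck term — is therefore the crux, and it is what licenses the clean iteration in the final step.
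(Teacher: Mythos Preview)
Your plan has a genuine gap at precisely the point you flag as the crux. You want to argue that $\Val{\scott{p}(t)}$ is ``again a Scott word value'' so that the three-way case analysis on $\scott{(0w)}$, $\scott{(1w)}$, $\scott{\epsilon}$ applies, but this is simply false in general. The hypothesis only tells you that the $\beta$-\emph{normal form} $\Val{t}_\beta$ is a Scott word; the weak call-by-value value $\Val{t}$ is just some closed abstraction, possibly with redexes under its outermost $\lambda$. For instance, if $\scott{w}$ has leading bit $0$ and $t = \lambda xyz.\,x\bigl(\lambda pqr.\,(\lambda a.a)(p\,\scott{w'})\bigr)$, then $t$ is already a value, $\Val{t}_\beta = \scott{(0w)}$, but $\Val{\scott{p}(t)} = \lambda pqr.\,(\lambda a.a)(p\,\scott{w'})$, which is \emph{not} the literal encoding $\scott{w}$. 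Your case analysis on the three canonical shapes therefore never gets off the ground, and the induction on $i$ cannot be set up with the invariant you state.

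The paper sidesteps this entirely with a typing trick rather than a computational one. It replaces $\scott{q}$ by $\scott{q}_\gamma = \lambda x.\,x(\lambda y.z_0)(\lambda y.z_1)z_2$ of type $\scott{W}\multimap\gamma$, where $\gamma$ is a fresh atomic type and $z_0,z_1,z_2$ are free variables of type $\gamma$. For any closed $u:\scott{W}$, subject reduction forces $\Val{\scott{q}_\gamma u}$ to have type $\gamma$; being a value it is a variable or an abstraction, but an abstraction is ruled out by the atomic type, and an application headed by any $z_j$ is impossible since $z_j:\gamma$ cannot be applied. Hence $\Val{\scott{q}_\gamma u}=z_j$ for some $j$, and substituting the $\scott{b}_j$'s back in yields $\Val{\scott{q}\,u}=\scott{b}_j$ exactly. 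Note that this argument works for \emph{any} closed $u$ of type $\scott{W}$ (in particular $u=\scott{p}^i(t)$), with no need to know that its call-by-value value has any particular syntactic shape; that is what makes the iteration over the predecessors go through.
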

\begin{proof}
The crucial fact
is that given a closed term $u$ of type $\scott{W}$,
$\scott{q} u$ always evaluates to 
$\scott{b}_j$ for some $j \in \{0,1,2\}$ by the weak call-by-value
strategy. To see this, take a fresh propositional variable $\gamma$,
variables $z_0, z_1, z_2$, and consider
$\scott{q}_\gamma  =  \lambda x. x (\lambda y.z_0)
(\lambda y. z_1) z_2: \scott{W} \multimap \gamma$. Since
$\scott{q}_\gamma u$ is of type $\gamma$, so is
$\Val{\scott{q}_\gamma u}$ by the subject reduction property.
Hence it cannot be an abstraction. It cannot either be an application,
since the only possible 
head variables are $z_0, z_1$ and $z_2$ of atomic type $\gamma$.
Therefore $\Val{\scott{q}_\gamma u}= z_j$ for some $j \in \{0,1,2\}$.
By substituting $\scott{b}_j$ for $z_j$, we obtain
$\Val{\scott{q}u}= \scott{b}_j$. Now the claim is easily verified.
\end{proof}

Finally we are able to prove the polynomial time soundness for words.

\def\thetheonum{\ref{soundness_theo}}
\begin{theorm_num}{\class{FP}-soundness}
For every $\lambda$-term $t$ of type $\church{W} \Rightarrow \scott{W}$, 
the associated function 
$f_t : \{0,1\}^* \rightarrow \{0,1\}^*$ defined by $f_t(w_1) = w_2$ 
$\Leftrightarrow$ $\Val{ t \church{w_1}}_\beta = \scott{w_2}$
 is a polynomial time function.
\end{theorm_num}

\begin{proof}
By the adequacy theorem, we have 
$t,\; \lambda x.p(x)\Vdash \church{W} \Rightarrow \scott{W}$
for some $\lambda x.p(x): o(\church{W})\rightarrow o$.

We also have $\scott{p}, k\Vdash \scott{W} \multimap \scott{W}$
and $\scott{q}, k'\Vdash \scott{W} \multimap \scott{B}_3$
for some constants $k$, $k'$, from which we easily obtain
$\lambda x.\scott{bit}_i(tx),\; \lambda x.p(x) + ik + k'' \vdash 
\church{W} \Rightarrow \scott{B}_3$ for some constant $k''\geq k'$.

By inspecting the proof of Theorem \ref{weak_sound} we obtain
a polynomial $P(x)$ such that 
$Time(\scott{bit}_i(t\church{w}))\leq P(|w|) + ik$ for 
every $w\in \{0,1\}^n$ and every $i\in \N$.
Furthermore, Theorem 
\ref{size_sound} 
gives  a polynomial $Q(x)$ such that
$\sharp \Val{t\church{w}}_\beta \leq Q(|w|)$, that implies that 
the Scott term
$\Val{t\church{w}}_\beta$ represents a word of length at most $Q(|w|)$.

Now the desired word $f_t(w)$ can be obtained by computing
the values of 
$\scott{bit}_0(t\church{w}), 
\scott{bit}_1(t\church{w}), 
\scott{bit}_2(t\church{w})$, \ldots
until we obtain 
$\Val{\scott{bit}_m(t\church{w})}=\scott{b}_2$.
We know that $m \leq Q(|w|)$. Hence the overall runtime 
is $R(|w|)$
with $R(x)=O((P(x)+ Q(x))^4 \cdot Q(x))$
in view of 
Theorem \ref{t-turing}.
\end{proof}

\section{Concluding remarks}
\label{conclusion}

Inspired by \cite{leivant1993lambda},
we have introduced a purely logical system $\DIAL$ 
that captures precisely the class of polynomial time functions.
To prove soundness, we have introduced a simple variant 
of the Hofmann-Dal Lago realizability.
%others characterizations of complexity classes such as \class{PSPACE}.
Here is a non-exhaustive list of the remaining open questions related to this work:

\begin{itemize}
\item Can we, instead of using a dual type system, directly deal with the !-connective? For the time being, it seems that it would considerably complicate the definition of the realizability relation.
\item We are compelled to introduce 
two realizability interpretations, one for bounding the
runtime, and the other for bounding the length of the output. Is it possible
to integrate them into one realizability interpretation?
\item Is it possible to relate our definition of realizability 
with the original one \cite{dal2005quantitative} more closely?
%Dal Lago \& Hofmann realizability introduced in \cite{dal2005quantitative}? 
We have observed that our higher order polynomials are equipped
with the structure of partial resource monoid (Remark \ref{r-monoid}). Our definition of 
realizability is also derived from their notion of length space. 
Establishing an exact
correspondence is, however, left to the future work.
%Even if it appears not to be interesting to try to find a commutative monoid structure for our majorizers because of the higher order polynomials are typed, it may be possible to find a simple and more general framework, that still enjoys interesting properties and generalizes our two definitions.
\item We have adapted the tiered recursion characterization of the \class{PTIME} functions. Can we find a suitable logical system as well corresponding to the tiered recursion characterizations of  \class{PSPACE} and \class{ALOGTIME} 
in \cite{leivant1997ramified}, \cite{leivant1995ramified} and \cite{LM00}? 
%Concerning \class{PSPACE}, 
%it seems it could be possible by adding to the two arrows (linear and non-linear) a third one, which would correspond to a non-space increasing arrow, but this work has yet to be done. 
%	\item Concerning the proof technique, is it possible to deal easily with space complexity, using our realizability definition? For example, can we use a space-optimized Krivine Machine (see \cite{krivine2007call} for the original description of the Krivine Machine) instead of the weak call-by-value strategy to express realizability conditions on some interesting space measure? In order to do this, we would need quantitative properties on Krivine Machine that have not been deeply studied yet. 
\end{itemize}

\bibliographystyle{eptcs}

%\nocite{*}
%\bibliography{biblio2}

\begin{thebibliography}{10}
\providecommand{\bibitemstart}[1]{\bibitem{#1}}
\providecommand{\bibitemend}{}
\providecommand{\bibliographystart}{}
\providecommand{\bibliographyend}{}
\providecommand{\url}[1]{\texttt{#1}}
\providecommand{\urlprefix}{Available at }
\providecommand{\bibinfo}[2]{#2}
\bibliographystart

\bibitemstart{n-types}
\bibinfo{author}{M.~Abadi}, \bibinfo{author}{L.~Cardelli} \&
  \bibinfo{author}{G.~Plotkin} (\bibinfo{year}{1993}):
  \emph{\bibinfo{title}{{Types for the Scott numerals}}}.
\newblock {\sl \bibinfo{journal}{Manuscript}} .
\bibitemend

\bibitemstart{AspertiRoversi:00}
\bibinfo{author}{A.~Asperti} \& \bibinfo{author}{L.~Roversi}
  (\bibinfo{year}{2002}): \emph{\bibinfo{title}{Intuitionistic Light Affine
  Logic (Proof-nets, Normalization Complexity, Expressive Power, Programming
  Notation)}}.
\newblock {\sl \bibinfo{journal}{{ACM} Transactions on Computational Logic}}
  \bibinfo{volume}{3}(\bibinfo{number}{1}), pp. \bibinfo{pages}{137 -- 175}.
\bibitemend

\bibitemstart{baillot2009light}
\bibinfo{author}{P.~Baillot} \& \bibinfo{author}{K.~Terui}
  (\bibinfo{year}{2009}): \emph{\bibinfo{title}{{Light types for polynomial
  time computation in lambda calculus}}}.
\newblock {\sl \bibinfo{journal}{Information and Computation}}
  \bibinfo{volume}{207}(\bibinfo{number}{1}), pp. \bibinfo{pages}{41--62}.
\bibitemend

\bibitemstart{bellantoni1992new}
\bibinfo{author}{S.~Bellantoni} \& \bibinfo{author}{S.~Cook}
  (\bibinfo{year}{1992}): \emph{\bibinfo{title}{{A new recursion-theoretic
  characterization of the polytime functions}}}.
\newblock {\sl \bibinfo{journal}{Computational Complexity}}
  \bibinfo{volume}{2}(\bibinfo{number}{2}), pp. \bibinfo{pages}{97--110}.
\bibitemend

\bibitemstart{Bellantoni2000}
\bibinfo{author}{S.~Bellantoni}, \bibinfo{author}{K.-H. Niggl} \&
  \bibinfo{author}{H.~Schwichtenberg} (\bibinfo{year}{2000}):
  \emph{\bibinfo{title}{Ramification, Modality and Linearity in Higher Type
  Recursion}}.
\newblock {\sl \bibinfo{journal}{Annals of Pure and Applied Logic}}
  \bibinfo{volume}{104}, pp. \bibinfo{pages}{17--30}.
\bibitemend

\bibitemstart{dal2005quantitative}
\bibinfo{author}{U.~Dal~Lago} \& \bibinfo{author}{M.~Hofmann}
  (\bibinfo{year}{2005}): \emph{\bibinfo{title}{{Quantitative models and
  implicit complexity}}}.
\newblock In: {\sl \bibinfo{booktitle}{FSTTCS}}. pp. \bibinfo{pages}{189--200}.
\bibitemend

\bibitemstart{DBLP:conf/tlca/LagoH09}
\bibinfo{author}{U.~Dal~Lago} \& \bibinfo{author}{M.~Hofmann}
  (\bibinfo{year}{2009}): \emph{\bibinfo{title}{Bounded Linear Logic,
  Revisited}}.
\newblock In: {\sl \bibinfo{booktitle}{TLCA}}. pp. \bibinfo{pages}{80--94}.
\bibitemend

\bibitemstart{dal-semantic}
\bibinfo{author}{U.~Dal~Lago} \& \bibinfo{author}{M.~Hofmann}
  (\bibinfo{year}{2009}): \emph{\bibinfo{title}{A Semantic Proof of Polytime
  Soundness for Light Affine Logic}}.
\newblock {\sl \bibinfo{journal}{Theory of Computing Systems}},
  \bibinfo{note}{to appear}.
\bibitemend

\bibitemstart{dal2008weak}
\bibinfo{author}{U.~Dal~Lago} \& \bibinfo{author}{S.~Martini}
  (\bibinfo{year}{2008}): \emph{\bibinfo{title}{{The weak lambda calculus as a
  reasonable machine}}}.
\newblock {\sl \bibinfo{journal}{Theoretical Computer Science}}
  \bibinfo{volume}{398}(\bibinfo{number}{1-3}), pp. \bibinfo{pages}{32--50}.
\bibitemend

\bibitemstart{DBLP:conf/popl/GaboardiMR08}
\bibinfo{author}{M.~Gaboardi}, \bibinfo{author}{J.-Y. Marion} \&
  \bibinfo{author}{S.~Ronchi Della~Rocca} (\bibinfo{year}{2008}):
  \emph{\bibinfo{title}{A logical account of {PSPACE}}}.
\newblock In: {\sl \bibinfo{booktitle}{POPL}}. pp. \bibinfo{pages}{121--131}.
\bibitemend

\bibitemstart{girard1998light}
\bibinfo{author}{J.-Y. Girard} (\bibinfo{year}{1998}):
  \emph{\bibinfo{title}{{Light linear logic}}}.
\newblock {\sl \bibinfo{journal}{Information and Computation}}
  \bibinfo{volume}{143}(\bibinfo{number}{2}), pp. \bibinfo{pages}{175--204}.
\bibitemend

\bibitemstart{hofmann1997application}
\bibinfo{author}{M.~Hofmann} (\bibinfo{year}{1997}): \emph{\bibinfo{title}{{An
  application of category-theoretic semantics to the characterisation of
  complexity classes using higher-order function algebras}}}.
\newblock {\sl \bibinfo{journal}{Bulletin of Symbolic Logic}}, pp.
  \bibinfo{pages}{469--486}.
\bibitemend

\bibitemstart{DBLP:journals/apal/Hofmann00}
\bibinfo{author}{M.~Hofmann} (\bibinfo{year}{2000}): \emph{\bibinfo{title}{Safe
  recursion with higher types and BCK-algebra}}.
\newblock {\sl \bibinfo{journal}{Annals of Pure and Applied Logic}}
  \bibinfo{volume}{104}(\bibinfo{number}{1-3}), pp. \bibinfo{pages}{113--166}.
\bibitemend

\bibitemstart{DBLP:journals/iandc/Hofmann03}
\bibinfo{author}{M.~Hofmann} (\bibinfo{year}{2003}):
  \emph{\bibinfo{title}{Linear types and non-size-increasing polynomial time
  computation}}.
\newblock {\sl \bibinfo{journal}{Information and Computation}}
  \bibinfo{volume}{183}(\bibinfo{number}{1}), pp. \bibinfo{pages}{57--85}.
\bibitemend

\bibitemstart{Lafont2004}
\bibinfo{author}{Y.~Lafont} (\bibinfo{year}{2004}): \emph{\bibinfo{title}{Soft
  linear logic and polynomial time}}.
\newblock {\sl \bibinfo{journal}{Theoretical Computer Science}}
  \bibinfo{volume}{318}, pp. \bibinfo{pages}{163--180}.
\bibitemend

\bibitemstart{DBLP:conf/lics/Leivant91}
\bibinfo{author}{D.~Leivant} (\bibinfo{year}{1991}): \emph{\bibinfo{title}{A
  Foundational Delineation of Computational Feasiblity}}.
\newblock In: {\sl \bibinfo{booktitle}{LICS}}. pp. \bibinfo{pages}{2--11}.
\bibitemend

\bibitemstart{leivant1993stratified}
\bibinfo{author}{D.~Leivant} (\bibinfo{year}{1993}):
  \emph{\bibinfo{title}{{Stratified functional programs and computational
  complexity}}}.
\newblock In: {\sl \bibinfo{booktitle}{POPL}}. pp. \bibinfo{pages}{325--333}.
\bibitemend

\bibitemstart{leivant1993lambda}
\bibinfo{author}{D.~Leivant} \& \bibinfo{author}{J.-Y. Marion}
  (\bibinfo{year}{1993}): \emph{\bibinfo{title}{{Lambda calculus
  characterizations of poly-time}}}.
\newblock In: {\sl \bibinfo{booktitle}{TLCA}}. pp. \bibinfo{pages}{274--288}.
\bibitemend

\bibitemstart{Leivant94c}
\bibinfo{author}{D.~Leivant} \& \bibinfo{author}{J.-Y. Marion}
  (\bibinfo{year}{1994}): \emph{\bibinfo{title}{Ramified recurrence and
  computational complexity {I}: Word recurrence and poly-time}}.
\newblock In: \bibinfo{editor}{P.~Clote} \& \bibinfo{editor}{J.~Remmel},
  editors: {\sl \bibinfo{booktitle}{Feasible Mathematics {II}}}.
  \bibinfo{publisher}{Birkhauser}, pp. \bibinfo{pages}{320 -- 343}.
\bibitemend

\bibitemstart{leivant1995ramified}
\bibinfo{author}{D.~Leivant} \& \bibinfo{author}{J.-Y. Marion}
  (\bibinfo{year}{1995}): \emph{\bibinfo{title}{{Ramified recurrence and
  computational complexity II: substitution and poly-space}}}.
\newblock In: {\sl \bibinfo{booktitle}{CSL}}. pp. \bibinfo{pages}{486--500}.
\bibitemend

\bibitemstart{leivant1997ramified}
\bibinfo{author}{D.~Leivant} \& \bibinfo{author}{J.-Y. Marion}
  (\bibinfo{year}{1997}): \emph{\bibinfo{title}{Predicative Functional
  Recurrence and Poly-space}}.
\newblock In: {\sl \bibinfo{booktitle}{TAPSOFT}}. pp.
  \bibinfo{pages}{369--380}.
\bibitemend

\bibitemstart{LM00}
\bibinfo{author}{D.~Leivant} \& \bibinfo{author}{J.-Y. Marion}
  (\bibinfo{year}{2000}): \emph{\bibinfo{title}{A characterization of
  alternating log time by ramified recurrence}}.
\newblock {\sl \bibinfo{journal}{Theoretical Computer Science}}
  \bibinfo{volume}{236}(\bibinfo{number}{1-2}), pp. \bibinfo{pages}{192--208}.
\bibitemend

\bibliographyend
\end{thebibliography}

\end{document}